\documentclass[journal]{IEEEtran}
\usepackage{amssymb,amsthm,amsmath,amsfonts}
\usepackage{algorithmic}
\usepackage{algorithm}
\usepackage{array}
\usepackage[caption=false,font=normalsize,labelfont=sf,textfont=sf]{subfig}
\usepackage{textcomp}
\usepackage{stfloats}
\usepackage{url}
\usepackage{verbatim}
\usepackage{graphicx}
\usepackage{cite}
\usepackage{color}
\newtheorem{theorem}{Theorem}
\newtheorem{lemma}{Lemma}
\newtheorem{remark}{Remark}

\begin{document}

\title{Robust Recovery of Sparse Support in Constrained Group Testing}

\author{Jianing Li, Li Chai,~\IEEEmembership{Senior Member,~IEEE}, Xinyao Rao, Hailin Zhang
\thanks{This work was supported by National Science Foundation of China under grant 62550085.}
\thanks{Jianing Li, Li Chai, Xinyao Rao, Hailin Zhang are with the College of Control Science and Engineering, Zhejiang University, Hangzhou 310027, China (Email:lijn202409@zju.edu.cn; chaili@zju.edu.cn; rxyyy@zju.edu.cn; hl\_zhang@zju.edu.cn).}}



\maketitle

\begin{abstract}
In the early stage of a pandemic, rapidly identifying a small number of infected individuals through large-scale screening is critical for pandemic control. Group testing has been widely used to improve testing efficiency and numerous studies have investigated the problem under noisy measurements, typically modeled as bit-flipping of test outcomes. However, these methods do not consider the constraints imposed by dilution, pool size, and the limit of detection ($\mathrm{\mathbf{LOD}}$), which can lead to false negatives when the viral load in a pool falls below the $\mathrm{\mathbf{LOD}}$. In addition, liquid dispensing errors, common in laboratory settings, affects diluted viral loads in a nonlinear manner. In this work, we introduce a novel measurement model that characterizes the process of sample pooling and dilution, incorporating $\mathrm{\mathbf{LOD}}$-induced binary quantization as well as liquid dispensing errors. For the case with known sparsity level, we propose a low-complexity decoding algorithm and provide theoretical guarantees for exact support recovery under both noiseless and noisy settings. For the case with unknown sparsity level, we develop a blind support recovery algorithm, along with a heuristic variant to enhance robustness, which can achieve exact support recovery with only $\mathcal{O}\left(k\log n\right)$ measurements. Extensive simulations show that the proposed algorithms outperform existing combinatorial group testing algorithms, validating the effectiveness, efficiency and robustness in large-scale screening.
\end{abstract}

\begin{IEEEkeywords}
Group testing, robust support recovery, dilution effect, limit of detection, liquid dispensing errors.
\end{IEEEkeywords}

\section{Introduction}
Over the past two decades, recurrent outbreaks of respiratory diseases, including SARS, MERS, and COVID-19, have posed severe challenges to global public health systems \cite{zhou2020pneumonia}. It is widely recognized that large-scale screening plays a critical role in containing the spread of such diseases \cite{sanchez2020covid,atkeson2024economic}, as timely identification of infected individuals enables effective isolation and intervention. However, practical constraints, such as limited testing capacity, reagent shortages, and financial costs, often make massive individual testing infeasible \cite{cheng2020diagnostic,cleary2021using}. These challenges become more pronounced in the early stage of an outbreak, when the disease prevalence is extremely low. In such scenarios, with only a small number of infections among a large population ($k\ll n$), there is an ongoing demand for efficient and scalable testing strategies that can rapidly identify all positive cases using as few tests as possible.

Group testing, first introduced by Dorfman \cite{dorfman1943detection} in 1943, has emerged as an effective approach to improve testing efficiency of large-scale screening. In Dorfman-style (adaptive) group testing methods \cite{dorfman1943detection,hwang1972method,du1999combinatorial,westreich2008optimizing,smith2009use,bilder2020tests,da2022simulation}, specimens from multiple individuals are combined into pools, and only a small number of pools are tested. If a pool tests negative, all samples within that pool are declared uninfected, whereas a positive outcome indicates the presence of at least one infected sample. In that case, all samples in the positive pools are subsequently tested individually. While this simple yet powerful strategy enables a large population to be screened using a limited number of assays, it requires multiple rounds of testing, leading to increased time delays and operational complexity \cite{da2022simulation}. These limitations have motivated the development of non-adaptive group testing schemes \cite{chan2011non,aldridge2014group,chan2014non,cai2017efficient,8374880,gandikota2019nearly,8771121,scarlett2020noisy,price2023fast,atia2012boolean,schumacher2020statistics}, in which all pools are designed and tested in a single stage. Each pool is formed by combining samples according to a pooling matrix $A\in\left\{0,1\right\}^{m\times n}$, where $m$ denotes the number of tests. The resulting test outcomes $y\in\left\{0,1\right\}^m$ satisfy
\begin{equation}
  y_i=\bigvee_{j=1}^{n} A_{i,j} v_j, \quad i=1,\ldots,m,
\end{equation}
where $\bigvee$ is the Boolean OR operation, $v\in\left\{0,1\right\}^{n}$ represents the sample status vector, with $v_j=1$ indicating that the $j$-th sample is positive and $v_j=0$ otherwise. The identification of positive samples is then performed via a decoding algorithm based on the binary test outcomes $y$ and the pooling matrix $A$. Among the most representative approaches \cite{chan2011non,aldridge2014group,chan2014non}, classical combinatorial group testing algorithms such as COMP, DD and SCOMP are designed for the noiseless setting. The Combinatorial Orthogonal Matching Pursuit (COMP) algorithm is the simplest, eliminating samples appearing in negative pools and declaring the remaining as possible positives. The Definite Defective (DD) algorithm improves upon COMP by declaring a sample as positive if it is the sole member of a positive pool, thereby reducing false positives. Sequential COMP (SCOMP) further refines the results of DD by iteratively selecting samples that best explain the observed positive tests. These combinatorial group testing algorithms primarily rely on Boolean operations and achieve low computational complexity. However, they are sensitive to noise and may suffer degraded recovery performance in practical settings.

Accordingly, several methods \cite{chan2011non,chan2014non,cai2017efficient,8374880,gandikota2019nearly,8771121,scarlett2020noisy,price2023fast} have been proposed for the noisy setting, where the test outcomes may flip with a certain probability. Early extensions adapt the above combinatorial algorithms by relaxing their strict thresholding rules. Chan et al. \cite{chan2011non, chan2014non} developed several variants (e.g., noisy COMP) to address noisy measurements. The noisy DD algorithm was studied in \cite{scarlett2020noisy} under Z-channel and reverse Z-channel noise models. More advanced frameworks \cite{cai2017efficient,8771121,price2023fast} further jointly design the pooling matrix and decoding algorithm. Cai et al. \cite{cai2017efficient} propose a decoding framework based on GROTESQUE tests, which combine multiplicity testing and expander-code-based localization to iteratively identify positives, assuming the sparsity level $k$ is known a priori. Lee et al. \cite{8771121} proposed SAFFRON, which uses sparse-graph codes to construct the pooling matrix and recovers the support set via an iterative peeling process, achieving a computational complexity of $\mathcal{O}\left(k\log n\right)$. This computational efficiency is obtained at the cost of increased sample complexity of $\mathcal{O}\left(k\log k\log n\right)$ for exact support recovery with high probability. Price et al. \cite{price2023fast} proposed a fast splitting framework for sparsity-constrained and noisy group testing. By recursively partitioning the item set and eliminating negative subsets, the algorithm achieves near-optimal test complexity with a decoding time that matches the number of tests. In addition to the above noise models, there are other interesting models as discussed in \cite{aldridge2019group}. A prominent example is the dilution noise model proposed in \cite{atia2012boolean}, where each positive sample in a pool is independently ``diluted'' (and hence behave as if it were negative) with probability $q$. Consequently, for a pool containing $u$ positives, the probability that all positives are diluted and the pool tests negative is $q^u$. Similarly, Schumacher et al. \cite{schumacher2020statistics} analyzed the noisy COMP algorithm under a noise model where the probability that a pool containing positives tests negative depends on the number of positives in the pool.

Recently, several studies proposed compressed sensing (CS)-based methods, which exploits signal sparsity to estimate viral loads using only $\mathcal{O}\left(k\log n\right)$ measurements \cite{donoho2006compressed,candes2008introduction}. They take advantage of quantitative measurements \cite{ghosh2021compressed,cohen2021multi,bharadwaja2022recovery,das2024performance}, such as viral loads of pools or cycle threshold (CT) values, which can be modeled as a linear system:
\begin{equation}
  z=Ax+\eta,
\end{equation}
where $z\in\mathbb{R}^m$ denotes the quantitative measurements, $x\in\mathbb{R}^n$ represents the viral loads of $n$ samples with ${\left\| x \right\|_0 = k}$, and $\eta\in\mathbb{R}^n$ is the noise vector. Sparse signal recovery is then performed via convex optimization (e.g., Basis Pursuit De-Noising, LASSO), enabling exact recovery of viral loads for positive samples. Nevertheless, these advantages come with significantly increased computational complexity, limiting the scalability of CS-based methods in large-scale screening.

While previous methods can effectively identify positive samples, their models do not capture the joint effects of dilution, pool size, and the $\mathrm{LOD}$ (i.e., detection threshold), which determine whether viral loads in pools are detectable. When multiple samples are combined into a pool, the viral load of a positive sample is inevitably diluted by negative samples in the same pool. If the pool size is too large, the diluted viral load may fall below the $\mathrm{LOD}$, causing pools that contain positives to be misclassified as negative \cite{arevalo2020false,arnaout2020sars,cleary2021using}. Consequently, test outcomes depend not only on the presence of positive samples, but also on the pool size and the corresponding viral loads, which cannot be captured by bit-flipping or dilution noise models in group testing. In addition, for CS-based methods, the dilution and $\mathrm{LOD}$ introduce thresholding and nonlinear distortions in the measurements, violating the assumed linear measurement model and leading to inaccurate recovery of viral loads.

Practical testing process also introduces additional uncertainty due to liquid dispensing volume errors \cite{shental2020efficient,TORRESACOSTA2022108713,FEDELE2024100195}, which perturb both the volume of each sample and the total viral load in the pool, affecting the diluted viral load in a nonlinear manner. Together with the dilution and $\mathrm{LOD}$, these factors give rise to nonlinear and pool-dependent measurements that are challenging for existing methods, limiting their performance in practical testing scenarios.

Different from prior work, we propose a novel measurement model that characterizes the actual laboratory testing process. The model formalizes the process of sample pooling and dilution, incorporates the binary quantization imposed by the $\mathrm{LOD}$, and accounts for liquid dispensing errors. While the $\mathrm{LOD}$-induced binary quantization is analogous to 1-bit compressed sensing \cite{6418031}, the measurements in our model are nonlinear and pool-dependent due to the dilution and dispensing errors. Building upon this model, we then investigate the sparse support recovery under two regimes: one with known sparsity level $k$, and the other with unknown sparsity level, which is a more challenging setting of blind support recovery. We develop low-complexity decoding algorithms for both cases and establish theoretical guarantees for exact support recovery. The main contributions of our work are summarized as follows:
\begin{enumerate}
\item{We propose a novel measurement model for group testing that captures the actual laboratory testing process, accounting for the dilution, $\mathrm{LOD}$-induced binary quantization, and liquid dispensing volume errors. In this model, a pool is declared positive only if its diluted viral load (i.e., the sum of viral loads divided by the pool size) exceeds the $\mathrm{LOD}$. Liquid dispensing errors are modeled as multiplicative perturbations drawn from a truncated Gaussian distribution, simultaneously affecting both individual sample volumes and the total viral load in the pool. Different from existing models, which typically assume either bit-flipping of test outcomes or additive noise on measurements, our model captures nonlinear, pool-dependent variations in the diluted viral loads, providing a comprehensive characterization of the testing process.}
\item{For the scenario where the sparsity level $k$ is known, we propose a low-complexity decoding algorithm based on a scoring mechanism, and establish rigorous theoretical guarantees for exact support recovery. In particular, we characterize the minimum number of tests required to achieve exact support recovery with high probability, showing that only $\mathcal{O}\left(k\log n\right)$ measurements are sufficient under both noiseless and noisy settings.}
\item{For the scenario where the sparsity level $k$ is unknown, we propose a max-gap blind support recovery algorithm, along with a heuristic variant that exploits the global pattern of the scores to identify positive samples, mitigating the impact of outliers and local fluctuations. We also establish the theoretical guarantees and show that exact support recovery can be achieved with only $\mathcal{O}\left(k\log n\right)$ measurements, even without prior knowledge of $k$.}
\item{Extensive simulations are conducted to evaluate the performance of our proposed algorithms. The results show that our methods outperform existing combinatorial group testing methods, validating the effectiveness, efficiency and robustness in large-scale screening.}
\end{enumerate}

The remainder of this paper is organized as follows. In Section~\ref{sec:Problem Formulation}, we introduce the system model and problem formulation. Section~\ref{sec:Decoding Algorithm for Known Sparsity} and Section~\ref{sec:Blind Decoding Algorithm for Unknown Sparsity} investigate the support recovery problem under the cases of known and unknown sparsity levels, respectively, where we develop the decoding algorithms and provide theoretical guarantees. In Section~\ref{sec:Numerical Experiments}, we present the simulation results. Finally, we conclude this paper in Section~\ref{sec:Conclusion}

\textit{Notation:} The set of real numbers is denoted by $\mathbb{R}$. Sets are denoted by calligraphic letters, e.g., $\mathcal{X}$, with $\left|\mathcal{X}\right|$ representing its cardinality. For a vector $x$, ${\left\| x \right\|_0}$ represents the $L_0$-norm, i.e., the number of non-zero elements in the $x$. $\mathbb{E}\left[\cdot\right]$ denotes the statistical expectation and $\mathbb{I}\left(\cdot\right)$ represents the indicator function, which equals to 1 if the condition inside is satisfied and 0 otherwise. $\mathcal{N}\left(\mu,\sigma^2\right)$ is the Gaussian distribution with mean $\mu$ and variance $\sigma^2$, and $\Phi\left(\cdot\right)$ represents the cumulative distribution function of the standard Gaussian distribution.

\section{Problem Formulation}\label{sec:Problem Formulation}
In this section, we formulate the group testing problem for identifying sparse positive samples. We begin with a noiseless model that explicitly accounts for the dilution and $\mathrm{LOD}$-induced binary quantization, and then extend it to the noisy setting by incorporating liquid dispensing volume errors.

\subsection{System Model}
Consider a population of $n$ individuals (or samples). Denote the viral load vector as $x=\left[x_1,x_2,\ldots,x_n\right]^T\in\mathbb{R}^n$, where $x_j\geq 0$ represents the viral load (measured in virus copies/mL) of the $j$-th sample, and $x_j=0$ if the $j$-th sample is uninfected. Let $\mathcal{S}$ denote the set of positive samples, i.e., the support set of $x$, defined as $\mathcal{S}=\left\{j\mid x_j>0\right\}$. In the early stage of a pandemic, the disease prevalence is extremely low, with only a few infections among thousands of individuals, implying that the vector $x$ is highly sparse, i.e., $k=\left|\mathcal{S}\right| \ll n$. Empirical studies of SARS-CoV-2 \cite{arnaout2020sars,cleary2021using,yang2021just} indicate that viral loads of infected individuals vary by many orders of magnitudes, with a very small minority of individuals harboring the vast majority of the infectious virions, which can be well approximated by a log-normal distribution \cite{yang2021just}. Accordingly, for each $j\in\mathcal{S}$, we model the viral load of the $j$-th sample as
\begin{equation}
\log_{10} {x_j} \sim \mathcal{N}\left(\mu_{x}, \sigma_x^2\right),
\end{equation}
where $\mu_x$ and $\sigma_x^2$ denote the mean and variance, respectively.

The pooling process is represented by a binary matrix $A\in\left\{0,1\right\}^{m\times n}$, with $m$ denoting the number of tests. Each row of $A$ corresponds to a pool, where $A_{i,j}=1$ indicates the $j$-th sample is assigned to the $i$-th pool, and $A_{i,j}=0$ otherwise. In this work, the pooling is constructed by independently assigning each sample to a pool with probability $p\in \left(0,\frac{1}{2}\right)$, so that each element $A_{i,j}$ equals to 1 with probability $p$ and 0 with probability $1-p$. 

Assuming that equal volumes are drawn from each sample, the viral load in the $i$-th pool is given by 
\begin{equation}\label{eq:noiseless_viral_load}
z_i=\frac{{\sum\limits_{j = 1}^n {{A _{i,j}}{x_j}} }}{{\sum\limits_{j = 1}^n {{A _{i,j}}} }},\quad i=1,\ldots,m.
\end{equation}
In practice, most diagnostic assays only detect viral material above a specific limit of detection ($\mathrm{LOD}$) and report binary outcomes (i.e., positive/negative). Due to the dilution effect caused by sample pooling, the viral load in the pool may fall below the $\mathrm{LOD}$,  even when the positives are included in the pool. Consequently, the observed binary measurement vector $y=\left[y_1,y_2,\ldots,y_m\right]^T$ is defined as
\begin{equation}\label{eq:noiseless_model}
y_i=\mathbb{I}\left(z_i\geq \mathrm{LOD}\right),\quad i=1,\ldots,m,
\end{equation}
where $\mathbb{I}\left(\cdot\right)$ is the indicator function.

The model in~(\ref{eq:noiseless_viral_load})-(\ref{eq:noiseless_model}) assumes perfect manual or robotic liquid handling. In practice, however, the volume of each sample contributed to a pool inevitably deviates from its intended value due to liquid handling errors, which are often modeled as Gaussian noise \cite{shental2020efficient}. Accordingly, we introduce a relative volume error $\epsilon_{i,j}$ for the $j$-th sample in the $i$-th pool. Since liquid volumes must be non-negative and mechanical errors are strictly bounded, we assume that $\epsilon_{i,j}$ follow a Gaussian distribution $\mathcal{N}\left(0,\sigma_{\epsilon}^2\right)$ truncated to the interval $\left[-\epsilon_{max}, \epsilon_{max}\right]$ with $\epsilon_{max}<1$.  Therefore, the noisy viral load in the $i$-th pool, denoted by $\tilde{z}_i$, is given by 
\begin{equation}\label{eq:noisy_viral_load}
  \tilde{z}_i=\frac{{\sum\limits_{j = 1}^n {\left( {1 + {\epsilon_{i,j}}} \right){A _{i,j}}{x_j}} }}{{\sum\limits_{j = 1}^n {\left( {1 + {\epsilon_{i,j}}} \right){A _{i,j}}} }}, \quad i=1,\ldots,m.
\end{equation}
The corresponding binary measurement vector $\tilde{y}=\left[\tilde{y}_1,\tilde{y}_2,\ldots,\tilde{y}_m\right]^T$ is obtained as
\begin{equation}\label{eq:noisy_model}
  \tilde{y}_i=\mathbb{I}\left(\tilde{z_i}\geq \mathrm{LOD}\right), \quad i=1,\ldots,m.
\end{equation}

It is worth noting that the liquid dispensing volume errors affect both the numerator (total viral load) and the denominator (total liquid volume) in (\ref{eq:noisy_viral_load}), which fundamentally distinguishes our model from 1-bit compressed sensing with additive noise. Moreover, when $\sigma_{\epsilon}=0$, we have $\epsilon_{i,j}=0$ for all $(i,j)$, and (\ref{eq:noisy_viral_load}) reduces to (\ref{eq:noiseless_viral_load}). Therefore, we focus on the noisy setting without loss of generality.

\subsection{Problem Formulation}
Given the pooling matrix $A\in\left\{0,1\right\}^{m\times n}$ and the binary test outcomes $\tilde{y}\in\left\{0,1\right\}^m$, our goal is to develop a robust, low-complexity decoding algorithm to identify positive samples, i.e., recover the support set $\mathcal{S}$. We investigate the support recovery problem under two scenarios: (i) the sparsity level $k$ is known a priori, i.e., the number of positives (or disease prevalence) is available; (ii) the sparsity level $k$ is unknown. Furthermore, we aim to establish rigorous lower bounds on the number of tests $m$ required for exact support recovery with high probability under both noiseless and noisy settings.

\section{Decoding Algorithm for Known Sparsity Level}\label{sec:Decoding Algorithm for Known Sparsity}
In this section, we consider the scenario where the number of positive samples, $k=\left|\mathcal{S}\right|$, is known a priori. We first propose a low-complexity decoding algorithm, and then derive the lower bounds on the required number of tests under both noiseless and noisy settings.

\subsection{Score-Based Support Recovery Algorithm}
The decoding algorithm is based on the score computed for each sample. Let $\mathcal{P^{+}}=\left\{i\mid \tilde{y}_i=1\right\}$ and $\mathcal{P^{-}}=\left\{i\mid \tilde{y}_i=0\right\}$ denote the sets of the positive and negative pools, respectively. For each sample $j\in\left\{1,2,\ldots,n\right\}$, we define its score $X_j$ as
\begin{equation}\label{eq:score}
  X_j=a \sum_{i \in \mathcal{P}^+} A_{i,j} - b \sum_{i \in \mathcal{P}^-}A_{i,j},
\end{equation}
where $a>0$ and $b>0$ are predefined reward and penalty parameters, respectively. 

To justify this reward-penalty design, we analyze the expected score for positive and negative samples under an idealized setting with a negligible detection threshold ($\mathrm{LOD}\to 0$) and no liquid dispensing errors ($\sigma_{\epsilon}=0$). For a positive sample $j\in\mathcal{S}$, the expected score is 
\begin{equation}
  \begin{aligned}
  &\mathbb{E}\left[X_j\mid j\in \mathcal{S}\right] \\
  & = a\sum_{i=1}^{m} \mathbb{P}\left(i\in \mathcal{P}^+\mid j\in \mathcal{S}, A_{i,j}=1\right)\mathbb{P}\left(A_{i,j}=1\right) \\
  &\quad - b\sum_{i=1}^{m} \mathbb{P}\left(i\in \mathcal{P}^-\mid j\in \mathcal{S}, A_{i,j}=1\right)\mathbb{P}\left(A_{i,j}=1\right)\\
  & = amp.
  \end{aligned}
\end{equation}
For a negative sample $j\notin \mathcal{S}$, its expected score is
\begin{equation}
  \begin{aligned}
  &\mathbb{E}\left[X_j\mid j\notin \mathcal{S}\right] \\
  & = a\sum_{i=1}^{m} \mathbb{P}\left(i\in \mathcal{P}^+\mid j\notin \mathcal{S}, A_{i,j}=1\right)\mathbb{P}\left(A_{i,j}=1\right) \\
  &\quad - b\sum_{i=1}^{m} \mathbb{P}\left(i\in \mathcal{P}^-\mid j\notin \mathcal{S}, A_{i,j}=1\right)\mathbb{P}\left(A_{i,j}=1\right)\\
  & = amp\left(1-(1-p)^k\right)-bmp(1-p)^k,
  \end{aligned}
\end{equation}
which is strictly smaller than the expected score of positive samples.
This simple analysis provides intuition for the proposed scoring mechanism, indicating that positive samples tend to obtain higher scores than negative samples under ideal conditions. A more rigorous analysis that accounts for the detection threshold ($\mathrm{LOD}$) and liquid dispensing errors is presented in the following section.

Based on the scoring mechanism in~(\ref{eq:score}), the decoding procedure is straightforward. After computing the scores for all samples, we sort them in descending order and select the top $k$ samples as the estimated support set $\hat{\mathcal{S}}$. The detailed procedure is given in Algorithm~\ref{score-based algorithm}.

The computational complexity of the score-based algorithm is $\mathcal{O}\left(mn+n\log n\right)$. Computing the sample scores and sorting them require $\mathcal{O}(mn)$ and $\mathcal{O}(n\log n)$ operations, respectively. Thus, the overall complexity scales nearly linearly with the numbers of tests and samples, making the algorithm suitable for large-scale screening.

\begin{algorithm}
\caption{Score-Based Support Recovery Algorithm}\label{score-based algorithm}
\begin{algorithmic}[1]
  \REQUIRE Binary measurement vector $\tilde{y}\in\left\{0,1\right\}^m$, pooling matrix $A\in\left\{0,1\right\}^{m\times n}$, the number of positives $k$, parameters $a,b>0$.
  \ENSURE Estimated support set $\hat{\mathcal{S}}$.

  \STATE Initialize scores: $X_j\leftarrow 0$ for $j=1,2,\ldots,n$;
  \FOR{$i=1,2,\ldots,m$}
    \FOR{$j\in\left\{l\mid A_{i,l}=1\right\}$}
        \IF{$\tilde{y}_i=1$}
          \STATE $X_j\leftarrow X_j+a$;
        \ELSE
          \STATE $X_j\leftarrow X_j-b$;
        \ENDIF
    \ENDFOR
  \ENDFOR
  \STATE Sort samples in descending order according to scores, and obtain an ordered sequence ($j_1,j_2,\ldots,j_n$) such that $X_{j_1}\geq X_{j_2}\geq\cdots \geq X_{j_n}$;
  \STATE $\hat{\mathcal{S}}\leftarrow \left\{j_1,j_2,\ldots,j_k\right\}$;
  \RETURN $\hat{\mathcal{S}}$.
\end{algorithmic}
\end{algorithm}

Beyond its computational efficiency, the proposed algorithm offers several advantages for identifying positive samples. Existing combinatorial group testing algorithms, such as COMP, DD, and SCOMP, rely on strict elimination rules, where any sample appearing in a negative pool is definitively excluded. However, in our measurement model, the dilution effect and dispensing errors may cause the viral loads of pools containing positives to fall below the $\mathrm{LOD}$, resulting in false negatives. In such cases, strict elimination rules risk missing true positive samples. In contrast, our algorithm assigns each sample a score which leverages information from all tests, mitigating false negatives and improving recovery robustness. Moreover, our algorithm relies only on the binary test outcomes $\tilde{y}$ and the pooling matrix $A$, without requiring knowledge of the actual liquid dispensing volumes. Thus, volume calibration is not required during decoding, facilitating practical implementation.

\subsection{Theoretical Performance Guarantees}
For each pool $i\in\left\{1,\ldots,m\right\}$, we define the conditional probabilities
\begin{align}
  \tilde{p}_{E_1}&\triangleq \mathbb{P}\left(\tilde{y}_i=1\mid j\in \mathcal{S},A_{i,j}=1\right), \\
  \tilde{p}_{E_2}&\triangleq \mathbb{P}\left(\tilde{y}_i=1\mid j\notin \mathcal{S},A_{i,j}=1\right),
\end{align}
which represent the probabilities that a pool tests positive given that it contains a specific positive or negative sample, respectively. The difference between $\tilde{p}_{E_1}$ and $\tilde{p}_{E_2}$ is denoted as 
\begin{equation}
  \Delta \tilde{p}\triangleq \tilde{p}_{E_1}-\tilde{p}_{E_2},
\end{equation}
which measures how reliably positive samples can be distinguished from negative ones based on test outcomes. When $\mathrm{LOD}$ is sufficiently small and $\sigma_{\epsilon}=0$ in the model~(\ref{eq:noisy_viral_load})-(\ref{eq:noisy_model}), the corresponding probability gap is strictly positive. However, in practice, this condition is not guaranteed due to the dilution and dispensing errors, which may cause pools containing positives to test negative, reducing $\tilde{p}_{E_1}$ and even making it smaller than $\tilde{p}_{E_2}$.

To ensure the reliable distinguishability between positive and negative samples, the effects of dilution and dispensing errors must be controlled. The following lemma provides sufficient conditions under which the probability gap $\Delta \tilde{p}$ remains positive and is lower bounded by $\left(1-p\right)^k / 2$. A larger $\Delta \tilde{p}$ implies stronger statistical separation between positive and negative samples, thereby improving the reliability of the score-based algorithm. The corresponding probability gap in the noiseless setting is denoted by $\Delta p$.

\begin{lemma}\label{lemma:probability gap}
  Consider the system model in~(\ref{eq:noisy_viral_load})-(\ref{eq:noisy_model}), where $x$ is a $k$-sparse vector and $A$ is a Bernoulli random matrix with $p\in\left(0,\frac{1}{2}\right)$. The viral loads of positive samples follow a log-normal distribution with mean $\mu_x$ and variance $\sigma_x^2$. The liquid dispensing volume errors follow a Gaussian distribution $\mathcal{N}\left(0,\sigma_{\epsilon}^2\right)$ truncated to the interval $\left[-\epsilon_{max}, \epsilon_{max}\right]$. If the probability $p$ satisfies
  \begin{equation}
    0< p \leq \frac{2\left(d_T-1\right)\Phi_{d_T}-d_T}{2\left(n-1\right)\left(1+{\sigma_{\epsilon}^2}/\left({1-\epsilon_{max}^2}\right)\right)\Phi_{d_T}+kd_T},
  \end{equation}
  where ${\Phi _{{d_T}}} = \Phi \left( {{\left({{\mu _x} - {{\log }_{10}}\left( {{\mathrm{LOD}} \cdot {d_T}} \right)}\right)}/{{{\sigma _x}}}} \right)$ and $d_T$ satisfies
  \begin{equation}
    \frac{1}{1-\alpha}< d_T <\frac{1}{\mathrm{LOD}}10^{\mu_x-\sigma_x\Phi^{-1}\left({1}/{\left(2\alpha\right)}\right)},\alpha\in\left(\frac{1}{2},1\right),
  \end{equation}
  then we have 
  \begin{equation}
    \Delta \tilde{p} \geq \frac{\left(1-p\right)^k}{2}.
  \end{equation}
  In the noiseless setting, the probability gap satisfies
  \begin{equation}
    \Delta p=\Delta \tilde{p} + \frac{\left(n-1\right)p\sigma_{\epsilon}^2}{\left(1-\epsilon_{max}^2\right)d_T}\Phi_{d_T}.
  \end{equation}
\end{lemma}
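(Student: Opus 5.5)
We will bound $\tilde p_{E_1}$ from below and $\tilde p_{E_2}$ from above, conditioning throughout on the random pool composition. Fix a pool $i$ and a sample $j$ with $A_{i,j}=1$. Write $N_i=\sum_{l\neq j}A_{i,l}$ for the number of other samples in the pool and $K_i$ for the number of other positives in it. For a negative $j$, a positive outcome requires $K_i\ge 1$. Hence $\tilde p_{E_2}\le 1-(1-p)^k$, which is the probability that at least one of the $k$ positives is also in the pool.

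For a positive $j$, we use a single event to guarantee detection: pool $i$ is positive whenever $x_j$ alone exceeds $\mathrm{LOD}$ times the effective dilution factor. Since every viral load is nonnegative, the numerator of $\tilde z_i$ is at least $(1+\epsilon_{i,j})x_j$. The denominator $D_i=\sum_l(1+\epsilon_{i,l})A_{i,l}$ enters the bound only through the ratio $D_i/(1+\epsilon_{i,j})$. So $\tilde y_i=1$ holds whenever $x_j\ge \mathrm{LOD}\cdot d_T$ and $D_i/(1+\epsilon_{i,j})\le d_T$. The first event, $\{x_j\ge \mathrm{LOD}\, d_T\}$, has probability exactly $\Phi_{d_T}$ by the log-normal model. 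It is independent of $A$ and $\epsilon$, so
\begin{equation*}
\tilde p_{E_1}\ \ge\ \Phi_{d_T}\,\mathbb{P}\!\left(\frac{D_i}{1+\epsilon_{i,j}}\le d_T\right).
\end{equation*}

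The complementary probability is controlled by Markov's inequality. Write $D_i/(1+\epsilon_{i,j}) = 1+\sum_{l\ne j}A_{i,l}(1+\epsilon_{i,l})/(1+\epsilon_{i,j})$. The errors are independent, symmetric and truncated, so $\mathbb{E}[1+\epsilon_{i,l}]=1$. We bound $\mathbb{E}[1/(1+\epsilon_{i,j})]$ by $1+\sigma_\epsilon^2/(1-\epsilon_{max}^2)$, using $1/(1+\epsilon)=(1-\epsilon)/(1-\epsilon^2)\le (1-\epsilon)/(1-\epsilon_{max}^2)$ and symmetry. This gives $\mathbb{E}[D_i/(1+\epsilon_{i,j})]\le 1+(n-1)p\bigl(1+\sigma_\epsilon^2/(1-\epsilon_{max}^2)\bigr)$. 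Markov's inequality then yields
\begin{equation*}
\tilde p_{E_1}\ \ge\ \Phi_{d_T}\left(1-\frac{1+(n-1)p\,(1+\sigma_\epsilon^2/(1-\epsilon_{max}^2))}{d_T}\right).
\end{equation*}

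Combining the two bounds, it suffices to show
\begin{equation*}
\Phi_{d_T}\left(1-\tfrac{1}{d_T}-\tfrac{(n-1)p c_\epsilon}{d_T}\right)-1+(1-p)^k\ \ge\ \tfrac{(1-p)^k}{2}, \qquad c_\epsilon=1+\sigma_\epsilon^2/(1-\epsilon_{max}^2).
\end{equation*}
We linearize with Bernoulli's inequality, $(1-p)^k\ge 1-kp$. After multiplying by $2d_T$, the requirement becomes $2(d_T-1)\Phi_{d_T}-d_T\ \ge\ p\bigl(2(n-1)c_\epsilon\Phi_{d_T}+kd_T\bigr)$. This is exactly the stated condition on $p$; with this slack the constants close, though this linearization is the step to check most carefully.

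The range of $d_T$ ensures the admissible interval for $p$ is nonempty. The upper bound on $d_T$ gives $\Phi_{d_T}>1/(2\alpha)$. The lower bound gives $d_T-1>\alpha d_T$. Together, $2(d_T-1)\Phi_{d_T}>2\alpha d_T\cdot\frac{1}{2\alpha}=d_T$, so the numerator of the condition on $p$ is positive.

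For the noiseless identity, we rerun the same chain with $\sigma_\epsilon=0$, so $c_\epsilon=1$. The two lower bounds differ only in the term $\Phi_{d_T}(n-1)p\,\sigma_\epsilon^2/\bigl((1-\epsilon_{max}^2)d_T\bigr)$. Reading $\Delta p$ and $\Delta\tilde p$ as these bound-level quantities gives the displayed equality.

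The main obstacle is the expectation of the ratio $D_i/(1+\epsilon_{i,j})$ under truncated Gaussian errors. A cleaner route is to avoid the ratio altogether by bounding $D_i\le(1+\epsilon_{i,j})d_T$ directly. However, we expect the stated $\sigma_\epsilon^2/(1-\epsilon_{max}^2)$ factor to come from the $\mathbb{E}[1/(1+\epsilon)]$ bound above, so we try that first.
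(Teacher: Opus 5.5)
Apart from one step, your argument is the paper's. You use the same sufficient event for $\tilde y_i=1$: drop the other positives from the numerator, then split into $x_j\ge\mathrm{LOD}\cdot d_T$ and the volume ratio being at most $d_T$. The rest also matches: independence, Markov, $\tilde p_{E_2}\le 1-(1-p)^k$, the $\alpha$-argument for nontriviality, and the noiseless ``identity'' read at the level of the lower bounds. Your linearization $(1-p)^k\ge 1-kp$ is correct and reproduces the stated condition on $p$ exactly; the paper leaves that step implicit.

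The step that fails as written is the bound on $\mathbb{E}[1/(1+\epsilon_{i,j})]$. From $1/(1+\epsilon)\le(1-\epsilon)/(1-\epsilon_{max}^2)$ and $\mathbb{E}[\epsilon]=0$ you only get $\mathbb{E}[1/(1+\epsilon)]\le 1/(1-\epsilon_{max}^2)=1+\epsilon_{max}^2/(1-\epsilon_{max}^2)$. Replacing $1-\epsilon^2$ by its worst case before averaging discards the factor $\epsilon^2$ that carries the $\sigma_\epsilon^2$. Nothing in your chain relates the moments of $\epsilon$ to $\sigma_\epsilon$; indeed, your bound stays at $1/(1-\epsilon_{max}^2)$ as $\sigma_\epsilon\to 0$. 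Whenever $\sigma_\epsilon<\epsilon_{max}$ (e.g.\ the paper's $\sigma_\epsilon=0.01$, $\epsilon_{max}=0.1$), this constant exceeds $c_\epsilon$. Your Markov step would then justify only a stricter condition on $p$ than the one stated, and the noiseless comparison would carry $\epsilon_{max}^2$ instead of $\sigma_\epsilon^2$.

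The fix is to use symmetry before bounding: $\mathbb{E}[1/(1+\epsilon)]=\mathbb{E}\bigl[\tfrac12\bigl(\tfrac{1}{1+\epsilon}+\tfrac{1}{1-\epsilon}\bigr)\bigr]=\mathbb{E}[1/(1-\epsilon^2)]=1+\mathbb{E}[\epsilon^2/(1-\epsilon^2)]\le 1+\mathbb{E}[\epsilon^2]/(1-\epsilon_{max}^2)$. You also need $\mathbb{E}[\epsilon^2]\le\sigma_\epsilon^2$, which holds because symmetric truncation of a centered Gaussian can only lower its second moment. To see this, write $\sigma_\epsilon^2$ as a weighted average of the conditional second moments inside and outside $[-\epsilon_{max},\epsilon_{max}]$. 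The inside one is at most $\epsilon_{max}^2$, below the outside one, so it cannot exceed the average. This is equivalent to the paper's route, which expands $1/(1+\epsilon)=\sum_{t\ge 0}(-\epsilon)^t$, kills odd moments by symmetry, and uses $\mathbb{E}[\epsilon^{2t}]\le\epsilon_{max}^{2t-2}\sigma_\epsilon^2$. With that correction, the rest of your proof goes through.
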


\begin{proof}
  See Appendix~\ref{appendix:lemma_probability gap}.
\end{proof}

In the noiseless setting without liquid dispensing errors, the probability gap $\Delta p$ is larger than $\Delta\tilde{p}$, indicating that dispensing errors reduce the distinguishability between positive and negative samples, making it more challenging to identify positives.

With Lemma~\ref{lemma:probability gap}, we are now ready to provide a sufficient condition on the number of tests required for exact support recovery.
\begin{theorem}[Known Sparsity Level]\label{theorem:score-based algorithm}
  Under the conditions of Lemma~\ref{lemma:probability gap}, let $\hat{\mathcal{S}}$ be the estimated support set of Algorithm~\ref{score-based algorithm} with parameters $a,b>0$. Then, for arbitrarily small $\delta>0$, we have
  \begin{equation}
    \mathbb{P}\left(\hat{\mathcal{S}}=\mathcal{S}\right)\geq 1-\delta,
  \end{equation}
  provided that the number of tests satisfies
  \begin{equation}
    m \geq \frac{4 \left(2r^2 + {\Delta \tilde{p}}r/3\right) \log\left(n/\delta\right)}{p \left({\Delta \tilde{p}}\right)^2},
  \end{equation}
  where $r=\frac{\max\left(a, b\right)}{a+b}$.
\end{theorem}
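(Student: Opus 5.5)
The plan is to reduce exact recovery to a uniform separation event and control it with Bernstein's inequality plus a union bound. Algorithm~\ref{score-based algorithm} returns $\hat{\mathcal{S}}=\mathcal{S}$ whenever every positive sample scores strictly above every negative one. Set the threshold
\[
\tau = mp\Bigl((a+b)\tfrac{\tilde p_{E_1}+\tilde p_{E_2}}{2}-b\Bigr).
\]
It then suffices that $X_j>\tau$ for all $j\in\mathcal{S}$ and $X_j<\tau$ for all $j\notin\mathcal{S}$. There are $n$ events in total, so it is enough to show each single-sample failure has probability at most $\delta/n$.

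\textbf{Per-sample decomposition.} For a fixed sample $j$, write $X_j=\sum_{i=1}^m W_{i}$ with
\[
W_i = A_{i,j}\bigl(a\,\tilde y_i - b(1-\tilde y_i)\bigr)=A_{i,j}\bigl((a+b)\tilde y_i-b\bigr).
\]
For fixed $x$ the pools are formed from independent rows of $A$ and independent errors $\epsilon_{i,\cdot}$, so the $W_i$ are i.i.d. If the viral loads are random, we condition on $x$, or treat $\tilde p_{E_1},\tilde p_{E_2}$ as the marginal probabilities, since the rows remain exchangeable. The step I expect to need care is justifying this independence, and it holds once we condition appropriately.

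By the definitions of $\tilde p_{E_1},\tilde p_{E_2}$, we have $\mathbb{E}[W_i]=p\bigl((a+b)\tilde p_{E_\ell}-b\bigr)$, where $\ell=1$ for a positive sample and $\ell=2$ for a negative one. The two means are therefore separated by $p(a+b)\Delta\tilde p$, and $\tau/m$ sits exactly halfway between them. The required deviation is thus $t=mp(a+b)\Delta\tilde p/2$.

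\textbf{Bernstein bound.} To obtain the constant $r$, normalize by $a+b$ and work with $V_i=W_i/(a+b)$. Then $|V_i-\mathbb{E}V_i|\le \max(a,b)/(a+b)\cdot$(a constant) $\lesssim 2r$, and
\[
\mathrm{Var}(V_i)\le \mathbb{E}[V_i^2]\le p\,r^2 .
\]
Bernstein's inequality gives
\[
\mathbb{P}\Bigl(\bigl|\textstyle\sum_i (V_i-\mathbb{E}V_i)\bigr|\ge \tfrac{mp\Delta\tilde p}{2}\Bigr)\le \exp\!\Bigl(-\frac{(mp\Delta\tilde p/2)^2}{2mpr^2+\tfrac{2}{3}\cdot r\cdot mp\Delta\tilde p/2}\Bigr)
=\exp\!\Bigl(-\frac{mp(\Delta\tilde p)^2}{4(2r^2+r\Delta\tilde p/3)}\Bigr).
\]
The exact constants $2r^2$ and $r\Delta\tilde p/3$ depend on how the variance and range are bounded, and I would tune these to match the statement. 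The target exponent is precisely $\log(n/\delta)$ under the stated condition on $m$. This gives $\mathbb{P}(\text{sample } j \text{ misclassified})\le\delta/n$.

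\textbf{Conclusion.} A union bound over all $n$ samples gives $\mathbb{P}(\hat{\mathcal{S}}\neq\mathcal{S})\le\delta$. Lemma~\ref{lemma:probability gap} guarantees $\Delta\tilde p\ge(1-p)^k/2>0$, so the separation is meaningful and $m=\mathcal{O}(k\log n)$ whenever $p\asymp 1/k$. The main obstacle is matching the Bernstein variance and range constants to the exact $2r^2$ and $r/3$ factors. I would handle this by bounding the variance via $\mathbb{E}[V_i^2]\le p\,r^2$ and the centered range by $r$ (or $2r$), adjusting if a factor is off.
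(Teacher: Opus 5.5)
This is essentially the paper's proof. It uses the same per-test contributions $A_{i,j}((a+b)\tilde y_i-b)$, the same midpoint threshold $m(\mu_1+\mu_2)/2$ with $t=mp(a+b)\Delta\tilde p/2$, Bernstein with second-moment proxy $p\max(a,b)^2$ and bound $\max(a,b)$ (your normalization by $a+b$ is cosmetic), and a union bound over the $n$ samples. On independence, the paper simply assumes it across pools with means given by $\tilde p_{E_1},\tilde p_{E_2}$, without justification. Conditioning on $x$ does give independence, but it changes those means, so your "it holds once we condition appropriately" is not literally right.

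The constants come out exactly with the one-sided Bernstein form the paper states. That form needs only the uncentered bound $V_i\le r$ (resp.\ $-V_i\le r$) and $v=\sum_i\mathbb{E}[V_i^2]$, so no centered-range bound is needed. Apply it to the lower tail for positives and the upper tail for negatives. Your displayed two-sided bound is missing a factor $2$, and including it would turn $\log(n/\delta)$ into $\log(2n/\delta)$.
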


\begin{proof}
  See Appendix~\ref{appendix:theorem_score-based algorithm}.
\end{proof}

\begin{remark}
  Theorem~\ref{theorem:score-based algorithm} shows that the recovery performance of the score-based algorithm is governed by the probability gap $\Delta \tilde{p}$. Using the lower bound $\Delta \tilde{p} \geq {\left(1-p\right)^k}/{2}$ derived in Lemma~\ref{lemma:probability gap}, the required number of tests scales as
  \begin{equation}\label{eq:score_m}
    m=\mathcal{O}\left(\frac{\log\left(n/\delta\right)}{p\left(1-p\right)^{2k}}\right),
  \end{equation}
  which grows logarithmically with the number of samples $n$. For fixed $n$, $k$, and $\delta$, $m$ is minimized when $p=1/\left(2k+1\right)$. Substituting this value into~(\ref{eq:score_m}), shows that the number of tests scales approximately as $\mathcal{O}\left(k\log n\right)$.
\end{remark}

In the noiseless setting, the probability gap $\Delta p$ is larger than $\Delta \tilde{p}$, implying that fewer tests are required to achieve the same recovery performance compared to the noisy setting. Nevertheless, even in the presence of dilution effect and dispensing errors, exact support recovery can still be achieved with a number of tests scaling approximately as $\mathcal{O}\left(k\log n\right)$, demonstrating the effectiveness and robustness of our algorithm for large-scale screening applications.

\section{Blind Decoding Algorithm for Unknown Sparsity Level}\label{sec:Blind Decoding Algorithm for Unknown Sparsity}
In this section, we propose a blind decoding algorithm to recover the support set without any prior knowledge, and establish sufficient conditions on the number of tests required for exact support recovery.
\subsection{Max-Gap Blind Support Recovery Algorithm}
Without prior knowledge of the sparsity level $k$, the top-$k$ sample selection strategy employed in Algorithm~\ref{score-based algorithm} is no longer applicable. We propose a max-gap decoding algorithm, which can directly estimate the number of positives from the distribution of the scores. The key insight is as follows. The scoring mechanism in~(\ref{eq:score}) exhibits a clear separation in the ordered score sequence, since the score of positive samples tends to be substantially higher than that of negative samples. By detecting the largest gap between consecutive sorted scores, the algorithm can estimate the sparsity level $\hat{k}$ and select the most likely positive samples.

The proposed max-gap algorithm proceeds as follows. It first computes the score for all samples according to~(\ref{eq:score}) and sorts in descending order. Next, the algorithm calculates the differences between consecutive sorted scores and select $\hat{k}$ corresponding to the largest gap, which cannot exceed a predefined maximum sparsity level. The estimated support set is taken as the top $\hat{k}$ samples. The detailed procedure is summarized in Algorithm~\ref{max-gap algorithm}.
\begin{algorithm}
\caption{Max-Gap Blind Support Recovery Algorithm}\label{max-gap algorithm}
\begin{algorithmic}[1]
  \REQUIRE Binary measurement vector $\tilde{y}\in\left\{0,1\right\}^m$, pooling matrix $A\in\left\{0,1\right\}^{m\times n}$, maximum sparsity level $K$, parameters $a,b>0$.
  \ENSURE Estimated support set $\hat{\mathcal{S}}$.

  \STATE Initialize scores: $X_j\leftarrow 0$ for $j=1,2,\ldots,n$;
  \FOR{$i=1,2,\ldots,m$}
    \FOR{$j\in\left\{l\mid A_{i,l}=1\right\}$}
        \IF{$\tilde{y}_i=1$}
          \STATE $X_j\leftarrow X_j+a$;
        \ELSE
          \STATE $X_j\leftarrow X_j-b$;
        \ENDIF
    \ENDFOR
  \ENDFOR
  \STATE Sort samples in descending order according to scores, and obtain an ordered sequence ($j_1,j_2,\ldots,j_n$) such that $X_{j_1}\geq X_{j_2}\geq\cdots \geq X_{j_n}$;
  \STATE Compute gaps: $\Delta X_l=X_{j_l}-X_{j_{l+1}}$ for $l=1,2,\ldots,n-1$;
  \STATE $\hat{k}=\arg \max\limits_l \left\{\Delta X_{l}: l\leq K\right\}$;
  \STATE $\hat{\mathcal{S}}\leftarrow \left\{j_1,j_2,\ldots,j_{\hat{k}}\right\}$;
  \RETURN $\hat{\mathcal{S}}$.
\end{algorithmic}
\end{algorithm}

The max-gap algorithm exploits the same statistical mechanism that positive samples tend to attain higher scores than negative ones due to appearing more frequently in positive pools. In addition, by integrating information from all tests, the algorithm is robust to false negatives caused by the dilution effect and liquid dispensing errors. Overall, the max-gap algorithm preserves the robustness of the score-based approach and enables reliable support recovery even when the sparsity level is unknown, making it suitable for practical large-scale screening scenarios.

\subsection{Heuristic Max-Gap Blind Support Recovery Algorithm}
While the max-gap algorithm identifies positive samples by detecting the largest gap in the ordered score sequence, its performance may be affected by isolated large gaps and local fluctuations. In particular, unusually large gaps may appear far beyond the true sparsity level, leading to overestimation of $k$, while local fluctuations may obscure the separation between positive and negative samples. Nevertheless, the max-gap algorithm is simple and effective when the score sequence exhibits a clear separation between positive and negative samples.

We propose a heuristic variant that exploits the global pattern of the ordered score sequence. In the region corresponding to positive samples, the gaps vary noticeably, whereas beyond this region, most gaps become very small and remain consistently close to zero. By identifying the point after which the gaps stay small over a range, the heuristic variant provides a more robust estimate of the end of the positive samples, particularly in the presence of outliers or noise. The detailed procedure is given in Algorithm~\ref{heuristic max-gap algorithm}. Compared with the max-gap algorithm, the heuristic variant is less sensitive to local fluctuations and anomalous gaps, although it incurs slightly higher computational overhead and is more challenging to analyze theoretically.
\begin{algorithm}
\caption{Heuristic Max-Gap Algorithm}\label{heuristic max-gap algorithm}
\begin{algorithmic}[1]
  \REQUIRE Binary measurement vector $\tilde{y}\in\left\{0,1\right\}^m$, pooling matrix $A\in\left\{0,1\right\}^{m\times n}$, maximum sparsity level $K$, parameters $a,b>0$.
  \ENSURE Estimated support set $\hat{\mathcal{S}}$.
  
  \STATE Compute scores $\left\{X_j\right\}_{j=1}^n$ for all samples;
  \STATE Sort samples in descending order according to scores, and obtain an ordered sequence ($j_1,j_2,\ldots,j_n$) such that $X_{j_1}\geq X_{j_2}\geq\cdots \geq X_{j_n}$;
  \STATE Compute gaps: $\Delta X_l=X_{j_l}-X_{j_{l+1}}$ for $l=1,2,\ldots,n-1$;
  \STATE Choose candidates corresponding to relatively large gap: \\$\mathcal{C}=\left\{l\mid \Delta X_l \geq \alpha \max \limits_{t\leq K}{\Delta X_t}, \alpha\in\left(0,1\right)\right\}$;
  \FOR{$l\in\mathcal{C}$}
    \STATE Compute the near-zero gaps between candidates: \\$F(l)=\sum\nolimits_{t = l + 1}^{l^{+}} {\mathbb{I}\left( {\Delta {X_t} \leq \epsilon } \right)}$, where $l^{+}$ is the next candidate in $\mathcal{C}$ (or $n$ if none), and $\epsilon$ is the small threshold;
  \ENDFOR
  \STATE $\hat{k}=\arg\max \limits_{l\in\mathcal{C},l\leq K} \frac{F(l)+1}{F(l^-)+1}$, where $l^{-}$ is the previous candidate in $\mathcal{C}$;
  \STATE $\hat{\mathcal{S}}\leftarrow \left\{j_1,j_2,\ldots,j_{\hat{k}}\right\}$;
  \RETURN $\hat{\mathcal{S}}$.
\end{algorithmic}
\end{algorithm}

\subsection{Theoretical Performance Guarantees}
We now provide the sufficient condition on the number of tests required to achieve exact support recovery using the max-gag algorithm.
\begin{theorem}[Unknown Sparsity Level]\label{theorem:max-gap algorithm}
  Under the conditions of Lemma~\ref{lemma:probability gap}, let $\hat{\mathcal{S}}$ be the estimated support set of Algorithm~\ref{max-gap algorithm} with parameters $a,b>0$. Then, for arbitrarily small $\delta>0$, we have
  \begin{equation}
    \mathbb{P}\left(\hat{\mathcal{S}}=\mathcal{S}\right)\geq 1-\delta,
  \end{equation}
  provided that the number of tests satisfies
  \begin{equation}
    m \geq \frac{16 \left(2r^2 + {\Delta \tilde{p}}r/6\right) \log\left(2n/\delta\right)}{p \left({\Delta \tilde{p}}\right)^2},
  \end{equation}
  where $r=\frac{\max\left(a, b\right)}{a+b}$.
\end{theorem}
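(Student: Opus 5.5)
We will argue through uniform concentration of every score around its conditional mean, followed by a deterministic argument about the max-gap rule. For each sample $j$ write
\[
X_j=\sum_{i=1}^m W_{i,j},\qquad W_{i,j}=A_{i,j}\bigl(a\,\tilde y_i-b(1-\tilde y_i)\bigr),
\]
where $W_{i,j}$ takes values in $\{-b,0,a\}$. For a fixed $j$ the summands are i.i.d. over $i$: the rows of $A$ and the dispensing errors are independent across pools, and we condition on the viral loads as in the proof of Theorem~\ref{theorem:score-based algorithm}.

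The means follow from the definitions of $\tilde p_{E_1}$ and $\tilde p_{E_2}$:
\[
\mathbb E[X_j]=mp\bigl((a+b)\tilde p_{E_1}-b\bigr)\ \ (j\in\mathcal S),\qquad \mathbb E[X_j]=mp\bigl((a+b)\tilde p_{E_2}-b\bigr)\ \ (j\notin\mathcal S).
\]
Their separation is $\mathbb E[X_j\mid j\in\mathcal S]-\mathbb E[X_{j'}\mid j'\notin\mathcal S]=(a+b)mp\,\Delta\tilde p$. After normalizing by $(a+b)$, each summand is bounded by $r$ and has variance at most $p\,r^2$ (up to constants). This is where $2r^2$ and the linear term in $r$ enter, through Bernstein's inequality.

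The concentration step uses Bernstein with deviation $t=(a+b)mp\,\Delta\tilde p/4$, which is half the margin used in Theorem~\ref{theorem:score-based algorithm}. This gives
\[
\mathbb P\bigl(|X_j-\mathbb E X_j|\ge t\bigr)\le 2\exp\!\Bigl(-\frac{mp(\Delta\tilde p)^2/16}{2r^2+r\Delta\tilde p/6}\Bigr).
\]
The factor $16$ and the $\Delta\tilde p\, r/6$ come from halving $t$, since the Bernstein term $Mt/3$ becomes $r\Delta\tilde p/12$ after normalization, with the constants arranged accordingly. A union bound over all $n$ samples, with the two-sided tails accounting for the $2n$, shows that the stated $m$ makes all deviations smaller than $t$ with probability at least $1-\delta$.

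On that event, let $\mu_1$ and $\mu_0$ denote the positive and negative means. Every positive score exceeds $\mu_1-t$, and every negative score lies in $(\mu_0-t,\mu_0+t)$. The sorted list therefore places all of $\mathcal S$ on top, and the gap at position $k$ is larger than $(\mu_1-\mu_0)-2t=2t$.

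The main obstacle is showing that this gap is the \emph{largest} one among $l\le K$. Two kinds of competitors need control:
\begin{itemize}
\item \emph{Gaps inside the negative block.} All negative scores lie in a window of width $2t$, so every consecutive gap among them is less than $2t$. This part is fine.
\item \emph{Gaps inside the positive block.} With the single-gap constant used above, gaps among positives are controlled only if the positive scores also sit in a window of width $2t$, so that consecutive gaps there are less than $2t$. This holds under the symmetric concentration $|X_j-\mu_1|<t$.
\end{itemize}
Combining the two, every competing gap is below $2t$ while the gap at position $k$ exceeds $2t$, so $\hat k=k$ provided $k\le K$, and hence $\hat{\mathcal S}=\mathcal S$. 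One subtlety is that $\mu_1$ is the same for all positives once we average over the viral loads, which the conditioning in Lemma~\ref{lemma:probability gap} permits. If that fails, the first thing I would try is to use the lower bound $\Delta\tilde p$ for each positive individually and accept a slightly adjusted constant.
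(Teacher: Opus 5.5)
Your concentration step is the paper's proof. You use the same $t=mp(a+b)\Delta\tilde p/4$ from the margin $m\mu_1-m\mu_2\ge 4t$, two-sided Bernstein with $M=\max(a,b)$ and second moment at most $pM^2$, and a union bound giving the factor $2n$; the constants come out as stated. Your deterministic step (boundary gap $>2t$, every within-block gap $<2t$) is correct on that event. It also supplies the justification for the $4t$ margin, which the paper only asserts.

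The genuine gap is the step that produces i.i.d.\ summands with mean $p((a+b)\tilde p_{E_1}-b)$. You condition on the viral loads to get independence across pools, but $\tilde p_{E_1}$ is an average over $x_j$: Lemma~\ref{lemma:probability gap} bounds it through $\mathbb{P}(x_j\ge\mathrm{LOD}\cdot d_T)=\Phi_{d_T}$. You cannot have both.
\begin{itemize}
\item Unconditionally, $W_{1,j},\dots,W_{m,j}$ are dependent through the shared $x$, since a small $x_j$ makes every pool containing $j$ likely to be negative. So Bernstein does not apply.
\item Conditionally on $x$, a positive's summand has mean $p((a+b)q_j(x)-b)$ with $q_j(x)\triangleq\mathbb{P}(\tilde y_i=1\mid A_{i,j}=1,x)$. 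Let $q_0(x)$ be the common value for negatives. The Lemma's argument gives no positive lower bound on $q_j(x)-q_0(x)$, because its factor $\Phi_{d_T}$ turns into $\mathbb{I}(x_j\ge\mathrm{LOD}\cdot d_T)$.
\end{itemize}

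Your fallback does not repair this. There is no per-positive version of $\Delta\tilde p$ to invoke. Even when every $q_j(x)$ exceeds $q_0(x)$, the conditional means $mp((a+b)q_j(x)-b)$ can differ across positives by amounts comparable to or larger than $t$. The positive block then no longer sits in a window of width $2t$, and a gap inside it can beat the boundary gap, which breaks the max-gap rule. Top-$k$ would survive this particular issue.

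Nor is this a matter of constants, and the paper's proof makes the same silent substitution of the averaged mean into Bernstein. Take $k=1$, which is compatible with the Lemma's hypotheses, and consider the event $x_j<\mathrm{LOD}$, which has positive probability under the log-normal model. On this event:
\begin{itemize}
\item every pool is negative, because the diluted load of a pool containing $j$ never exceeds $x_j$;
\item every score equals $-b$ times the sample's degree, so the output is a function of $A$ alone.
\end{itemize}
Hence $\mathbb{P}(\hat{\mathcal S}\ne\mathcal S)$ is bounded below by roughly $\mathbb{P}(x_j<\mathrm{LOD})>0$ for every $m$, and arbitrarily small $\delta$ is unattainable.

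A valid statement has to work conditionally on $x$. It must assume a per-positive conditional gap and, for the max-gap rule, a bound on the spread of the positives' conditional means. The probability that these assumptions fail must then be added to $\delta$.
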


\begin{proof}
  See Appendix~\ref{appendix:theorem_max-gap algorithm}.
\end{proof}

\begin{remark}
  Theorem~\ref{theorem:max-gap algorithm} shows that the performance of the max-gap algorithm is also governed by the probability gap $\Delta \tilde{p}$. Applying the lower bound $\Delta \tilde{p} \geq {\left(1-p\right)^k}/{2}$ derived in Lemma~\ref{lemma:probability gap}, the required number of tests scales as
  \begin{equation}
    m=\mathcal{O}\left(\frac{\log\left(n/\delta\right)}{p\left(1-p\right)^{2k}}\right).
  \end{equation}
  The asymptotic scaling with respect to $n$, $k$ and $p$ is the same as that of the score-based algorithm. The primary difference lies in a larger multiplicative constant required to compensate for the uncertainty of the unknown sparsity level.
\end{remark} 

It is difficult to provide theoretical guarantees for the heuristic variant. Nevertheless, the heuristic algorithm is less sensitive to isolated large gaps and local fluctuations. It can achieve more accurate and robust recovery in practical testing scenarios, as will be illustrated in the simulation results, outperforming the max-gap algorithm in most cases.

When the sparsity level $k$ is known a priori, the score-based algorithm only needs to select the top $k$ samples with the highest scores. In this case, exact support recovery is guaranteed as long as the minimum score among the positive samples exceeds the maximum score among the negative samples, even if the gap between them is small. In contrast, when the sparsity level $k$ is unknown, the decoding algorithm must identify the boundary between positive and negative samples without any prior knowledge. This demands a larger number of tests to ensure sufficient separation between the scores, so that a clear maximum gap appears in the ordered score sequence. Consequently, compared to the score-based algorithm, the max-gap algorithm generally requires more tests to achieve the same recovery performance.

\begin{figure*}[!t]
\centering
\includegraphics[width=\textwidth]{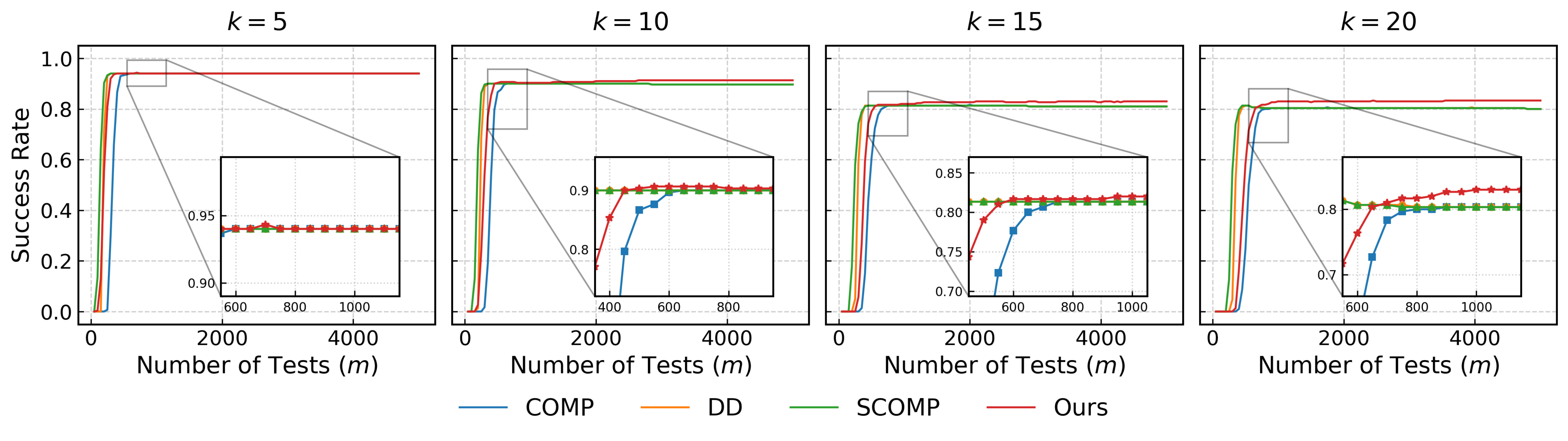}
\vspace{-0.8cm}
\caption{Performance comparison of different algorithms under varying sparsity levels, $k\in\left\{5,10,15,20\right\}$, with $p=\frac{1}{32}$ and $\mathrm{LOD}=1$. The parameters of the score-based algorithm are set to $a=1$ and $b=10$.}
\label{knonwK_success_rate_k}
\vspace{-0.4cm}
\end{figure*}

\section{Numerical Experiments}\label{sec:Numerical Experiments}
In this section, we conduct numerical simulations to evaluate the performance of the proposed algorithms, demonstrating their effectiveness and robustness under different sparsity levels and constraints.

\subsection{Experimental Setup}
We consider a population of size $n=5000$ with $k$ infected individuals, which are selected uniformly at random. The number of infections $k$ takes values in $\left\{5,10,15,20\right\}$. Following \cite{yang2021just}, the distribution of viral load can be well approximated by a log-normal distribution, i.e., for each $j\in \mathcal{S}$, $\log_{10}x_j \sim \mathcal{N}\left(\mu_x, \sigma_x^2\right)$. The parameters are set to $\mu_x=6.39$ and $\sigma_x^2=1.83^2$, based on the findings in \cite{abi5273}. The liquid dispensing volume errors are modeled as truncated Gaussian noise, i.e., $\epsilon_{i,j}\sim \mathcal{N}\left(0,\sigma_{\epsilon}^2\right)$ truncated to the interval $\left[-\epsilon_{max},\epsilon_{max}\right]$, where $\epsilon_{max}=0.1$ and $\sigma_{\epsilon}=0.01$. 

The pooling matrix $A\in\left\{0,1\right\}^{m\times n}$ is generated according to the Bernoulli distribution, where each element $A_{i,j}$ equals to 1 with probability $p$ and 0 otherwise. For each parameter configuration, a series of $m$ (ranging from 1 to $n$) are simulated over 300 independent trials. In each trial, both the viral load vector $x$ and the pooling matrix $A$ are randomly generated, and the binary measurement vector $\tilde{y}$ is obtained according to~(\ref{eq:noisy_viral_load})-(\ref{eq:noisy_model}), with the $\mathrm{LOD}$ chosen from $\left\{1,10,50,100\right\}$ \cite{fung2020direct}. A trial is considered successful if all positive samples are correctly identified, i.e., the true positive rate equals to 1, and the success rate is the fraction of successful trials among the total trials. We also adopt two widely used metrics, recall and precision. Recall measures the fraction of true positives that are correctly identified, and precision measures the fraction of predicted positive samples that are true positives.

In the experiments, we compare our algorithms with existing combinatorial group testing algorithms, including COMP, DD, SCOMP \cite{aldridge2014group}, SAFFRON\cite{8771121}, and the fast splitting algorithm in \cite{price2023fast}. For COMP, DD and SCOMP, the same pooling matrix is used. When the sparsity level $k$ is known, these algorithms output a set of $k$ samples selected from their candidate set. When $k$ is unknown, the algorithms directly output all samples identified as positive according to their respective decoding rules. For SAFFRON and the fast splitting algorithm, they are implemented using their prescribed pooling designs and corresponding decoding algorithms, where $k$ is used to design the pooling matrix or guide hierarchical splitting. Comparisons with these two algorithms are conducted only when $k$ is known.

\subsection{Experimental Results}
\subsubsection{Support Recovery with Known Sparsity Level}
We first evaluate the recovery performance of our algorithm in comparison with COMP, DD, and SCOMP, under varying sparsity levels, $k\in\left\{5,10,15,20\right\}$, with $p=\frac{1}{32}$ and $\mathrm{LOD}=1$. As shown in Fig.~\ref{knonwK_success_rate_k}, the proposed score-based algorithm outperforms classical combinatorial group testing algorithms across all sparsity levels. In particular,
compared with SCOMP, our method achieves an improvement in success rate of approximately $1.7\%\text{--}3.3\%$ across different sparsity levels, with the performance gain becoming more pronounced as $k$ increases. This demonstrates the effectiveness of the proposed method in reliably identifying positive samples under different sparsity levels.

We further compare the recovery performance and computational efficiency of our method with SAFFRON and the fast splitting algorithm under a fixed sparsity level $k=10$, with $p=\frac{1}{32}$, $\mathrm{LOD}=1$. The results of $m=1000$ and $m=2000$ are presented in Fig.~\ref{fig:combined_performance_runtime}. Our method achieves significantly higher success rates while maintaining lower computational cost. When $m=1000$, our method achieves a success rate of $88\%$, with a recall of $98.8\%$ and a precision of $98.8\%$. In comparison, SAFFRON attains a precision of $100\%$, but identifies only a fraction of true positives, resulting in a recall of $80.7\%$ and a much lower success rate of $29\%$. The fast splitting algorithm achieves a success rate ($86\%$) comparable to ours and a high recall of $98.5\%$, however, its precision drops sharply to $1.4\%$, indicating that a large number of negative samples are incorrectly declared positive. As the number of tests increases to $m=2000$, the recovery performance of all methods improves. The success rates of SAFFRON and the fast splitting algorithm increase to $82.3\%$ and $87.3\%$, respectively. Nevertheless,
our method still achieves the highest success rate ($88.7\%$) while maintaining both high recall ($98.9\%$) and high precision ($98.9\%$), demonstrating its ability to identify true positives with a low false-positive rate. In terms of running time, we compare the time required for pooling design and decoding. Experiments are conducted on a Linux server equipped with an Intel Xeon Gold 6226R CPU (32 cores). For $m=1000$, our method requires 0.062s for pooling design and 0.012s for decoding, with a total runtime of only 0.074s. Although SAFFRON exhibits the fastest decoding speed (0.001s), its pooling design requires 0.362s, leads to a total runtime nearly five times larger than that of our method. The fast splitting algorithm spends 0.129s on pooling design and 0.14s on decoding, yielding a total runtime of 0.269s, more than three times that of our method. For $m=2000$, the total runtime of our method increases slightly to 0.15s, but remains comparable to that of the fast splitting algorithm and is still lower than SAFFRON. This computational advantage stems from the simple Bernoulli pooling design and the efficient decoding procedure of our method, enabling fast and reliable identification with low computational cost.
\begin{figure}[!t]
\centering
\subfloat[Recovery performance]{
    \includegraphics[width=\linewidth]{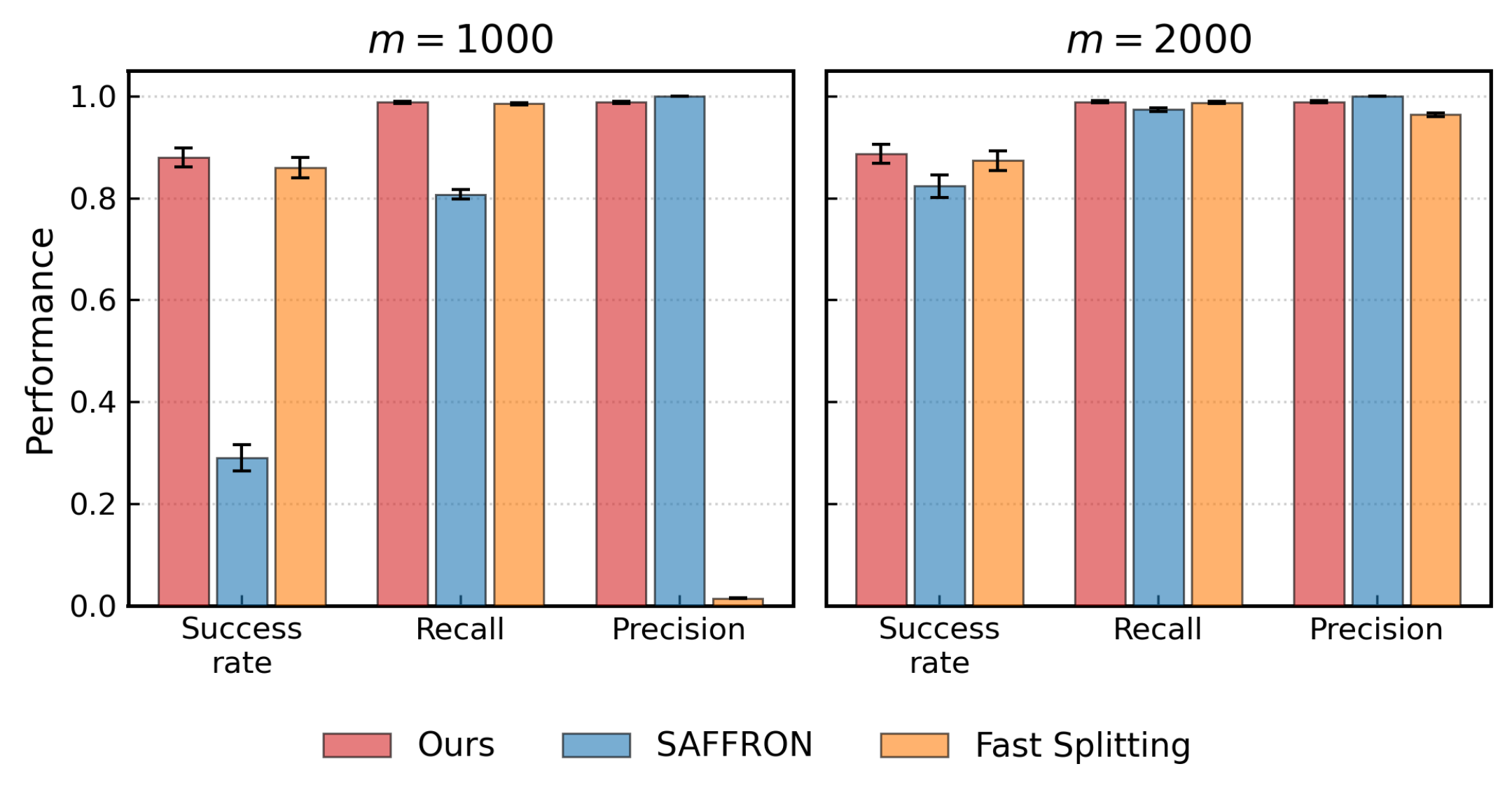}
}
\hfill
\subfloat[Running time]{
    \includegraphics[width=\linewidth]{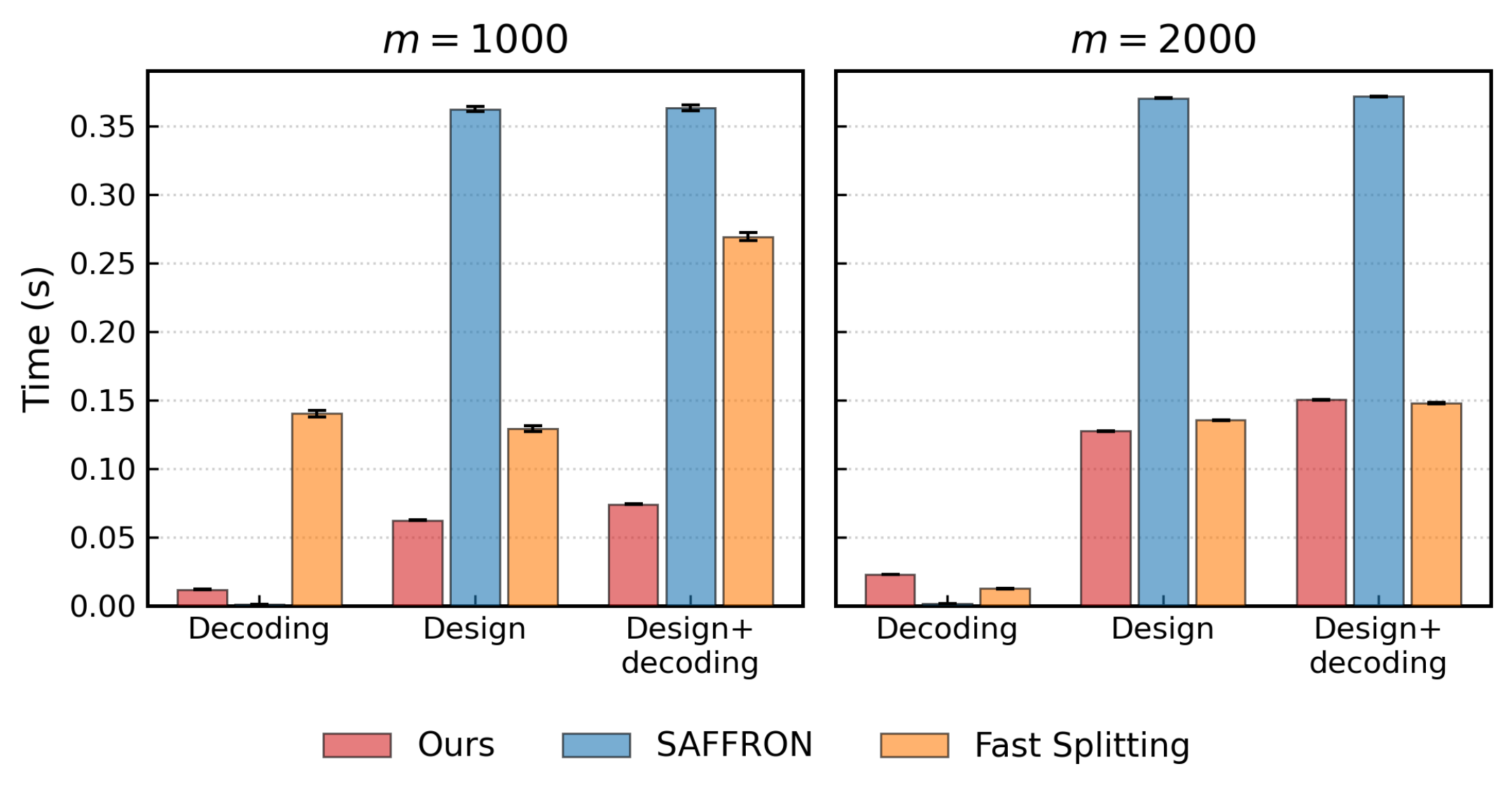}
}
\caption{Comparison of different algorithms under different numbers of tests, $m=1000$ and $m=2000$, with $k=10$, $p=\frac{1}{32}$, and $\mathrm{LOD}=1$. The parameters of the score-based algorithm are set to $a=1$ and $b=10$.}
\label{fig:combined_performance_runtime}
\vspace{-0.4cm}
\end{figure}

\begin{figure}[!t]
\centering
\includegraphics[width=\linewidth]{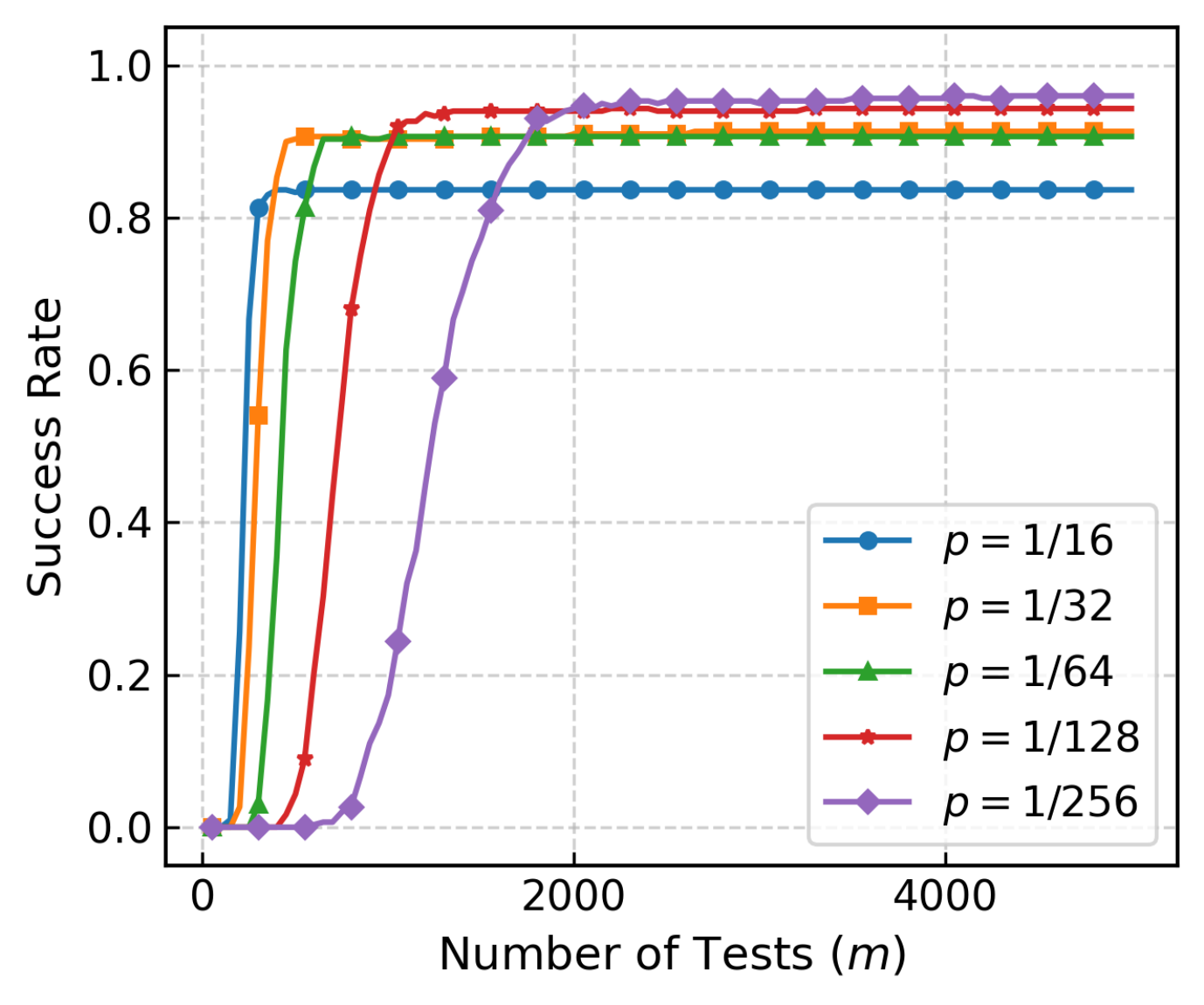}
\vspace{-0.8cm}
\caption{Recovery performance of the score-based algorithm under different pooling probabilities, $p\in\left\{\frac{1}{16},\frac{1}{32},\frac{1}{64},\frac{1}{128},\frac{1}{256}\right\}$, with $k=10$ and $\mathrm{LOD}=1$. The parameters are set to $a=1$ and $b=10$.}
\label{knonwK_success_rate_p}
\vspace{-0.4cm}
\end{figure}

To investigate the sensitivity of the algorithm to the dilution effect, we evaluate the recovery performance under different pooling probabilities $p\in\left\{\frac{1}{16},\frac{1}{32},\frac{1}{64},\frac{1}{128},\frac{1}{256}\right\}$, with $k=10$ and $\mathrm{LOD}=1$. Fig.~\ref{knonwK_success_rate_p} reveals that the choice of pooling probability $p$ has a significant impact on the recovery performance. When $p$ is small, the average pool size ($np$) remains limited, increases the likelihood that the diluted viral load exceeds the $\mathrm{LOD}$ and thereby mitigating the dilution effect. However, each sample participates in only a small number of pools, leading to insufficient information for decoding. As a result, a larger number of tests is required to achieve exact recovery. In contrast, when $p$ is large, each sample appears in many pools, providing more information and thereby reducing the number of tests required. Nevertheless, the increased pool size significantly amplifies the dilution effect, leading to more false negatives and the degradation in success rate. These results reveal a trade-off in the choice of pooling probability $p$: too small values provide insufficient information, while excessively large values lead to more severe dilution effects. Therefore, selecting an appropriate $p$ is crucial to balance information sufficiency and robustness against dilution.

\begin{figure*}[!t]
\centering
\includegraphics[width=\textwidth]{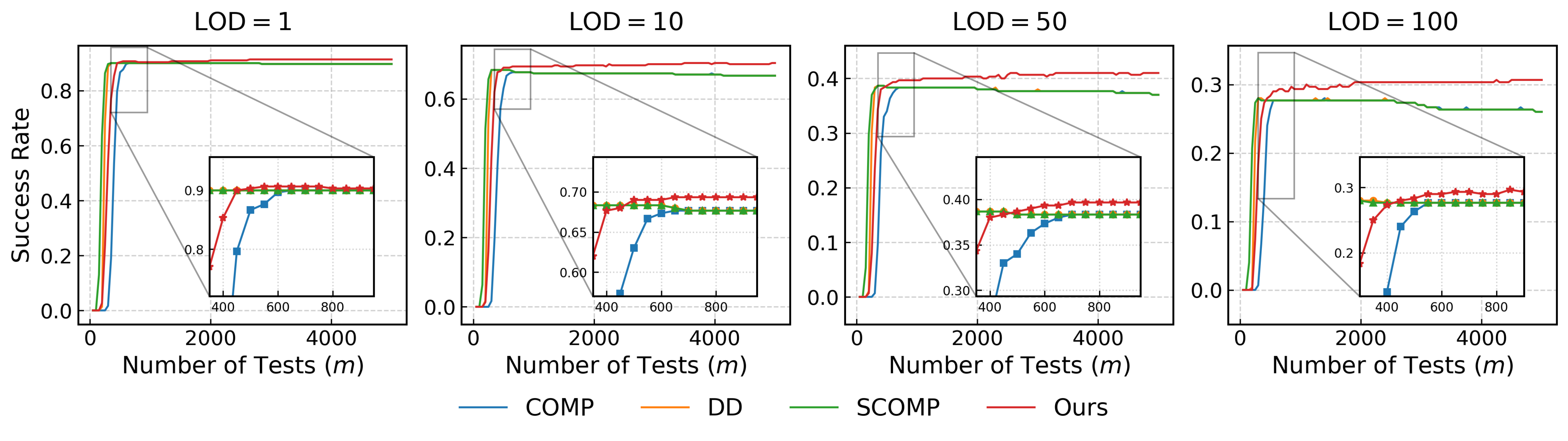}
\vspace{-0.8cm}
\caption{Recovery performance of different algorithms evaluated under different values of $\mathrm{LOD}$, with $k=10$ and $p=\frac{1}{32}$. The parameters of the scored-based algorithm are set to $a=1$ and $b=10$.}
\label{knonwK_success_rate_lod}
\vspace{-0.4cm}
\end{figure*}

\begin{figure*}[!t]
\centering
\includegraphics[width=\textwidth]{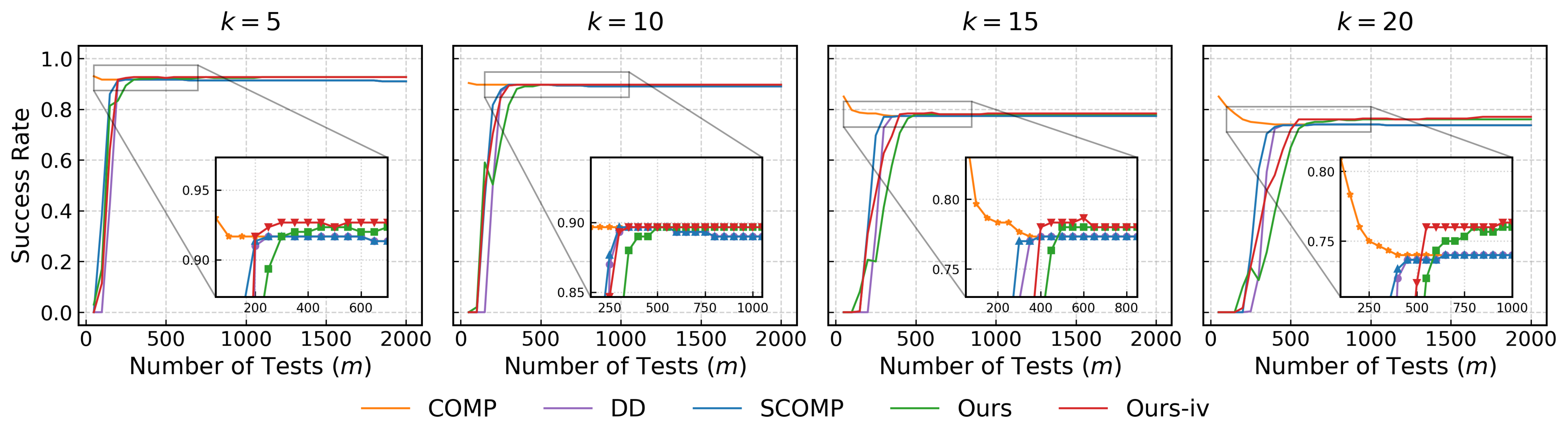}
\vspace{-0.8cm}
\caption{Performance comparison of different algorithms under varying sparsity levels, $k\in\left\{5,10,15,20\right\}$, with $p=\frac{\sqrt{2}}{32}$ and $\mathrm{LOD}=1$. The parameters of the max-gap algorithm are set to $a=1$ and $b=10$.}
\label{unknownK_success_rate_k}
\vspace{-0.4cm}
\end{figure*}
We further evaluate the impact of $\mathrm{LOD}$ on the recovery performance. Fig.~\ref{knonwK_success_rate_lod} shows the success rate of all methods under different values of $\mathrm{LOD}$, with $k=10$ and $p=\frac{1}{32}$. As the $\mathrm{LOD}$ increases, the success rate of all methods decreases, while the proposed method consistently outperforms the combinatorial group testing algorithms. Specifically, the success rate of our method decreases from $91.33\%$ at $\mathrm{LOD}=1$ to $70.33\%$ at $\mathrm{LOD}=10$, and further drops to $41\%$ and $30.67\%$ when $\mathrm{LOD}=50$ and $100$, respectively. The performance degradation is because a larger $\mathrm{LOD}$ imposes a stricter threshold for declaring a pool as positive, making it more difficult to detect diluted viral loads, particularly for pools containing samples with low viral loads. This impact is more pronounced for combinatorial group testing methods, which rely heavily on negative test outcomes to eliminate samples. As the $\mathrm{LOD}$ increases, there are more pools containing positive samples misclassified as negative, leading to the exclusion of true positive samples from the candidate set and consequently significant performance degradation. In contrast, the proposed method leverages information from all tests, making it less sensitive to such misclassifications and more robust under high $\mathrm{LOD}$. These results highlight the importance of accounting for the $\mathrm{LOD}$ in practical testing scenarios, and demonstrate that our method maintains comparatively reliable performance even as $\mathrm{LOD}$ increases.

\subsubsection{Blind Support Recovery with Unknown Sparsity Level}
In practical large-scale screening scenarios, the sparsity level $k$ is typically unknown, which poses significant challenges for exact support recovery. In this setting, we evaluate the performance of the proposed max-gap algorithm and its heuristic variant under different sparsity levels, $k\in\left\{5,10,15,20\right\}$, with $p=\frac{\sqrt{2}}{32}$ and $\mathrm{LOD}=1$. As shown in Fig.~\ref{unknownK_success_rate_k}, both algorithms achieve better performance across all sparsity levels, with the heuristic variant outperforming the max-gap algorithm. Compared with SCOMP, the max-gap algorithm improves the success rate by approximately $0.66\%\text{--}2.67\%$ across different sparsity levels, while the heuristic variant achieves the larger improvement of approximately $0.66\%\text{--}3.34\%$. It is worth noting that COMP can achieve a higher success rate than other algorithms when $m$ is small. This is because COMP eliminates samples based on negative test outcomes and declares the remaining as positive, thus achieving a high true positive rate. However, this conservative strategy also yields a large number of false positives, resulting in a high false positive rate. Overall, these results demonstrate that even without prior knowledge of the sparsity level, our algorithms can still reliably identify positive samples. 

\begin{figure}[htb]
\centering
\includegraphics[width=\linewidth]{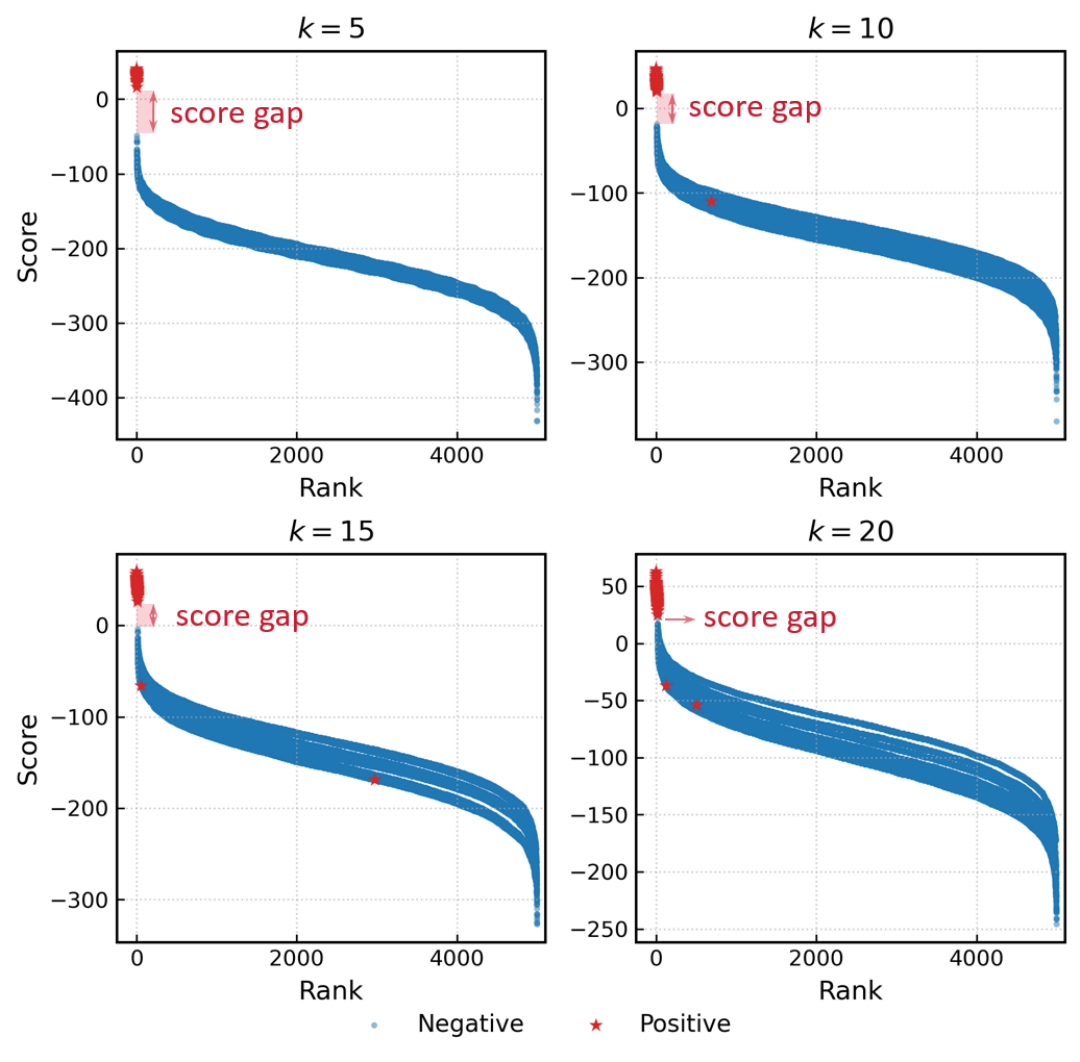}
\vspace{-0.8cm}
\caption{The sorted score distributions of 10 representative instances under a fixed number of tests, where $m=500$ for $k=5,10$ and $m=700$ for $k=15,20$, with $p=\frac{1}{16}$ and $\mathrm{LOD}=1$. The parameters are set to $a=1$ and $b=10$. Red stars indicate true positive samples, while blue dots represent negative samples.}
\label{score_sequence}
\vspace{-0.4cm}
\end{figure}
We also provide several representative examples in Fig.~\ref{score_sequence} for $k\in\left\{5,10,15,20\right\}$ with $p=\frac{1}{16}$ and $\mathrm{LOD}=1$, showing the clear separation between positive and negative samples in the ordered score sequence. In most cases, positive samples tend to have higher scores than negative ones, enabling the max-gap algorithm to effectively identify the boundary between them. However, due to the dilution effect and noise, some negative samples may occasionally attain higher scores than positive samples, leading to inaccurate estimation of sparsity level. As $k$ increases, the gap between positive and negative samples gradually narrows, and thus a larger number of tests is required to preserve a clear separation and achieve accurate recovery.

Finally, we compare the recovery performance of the max-gap algorithm and its heuristic variant under different values of $b$, with $k=10$, $p=\frac{\sqrt{2}}{32}$ and $\mathrm{LOD}=1$. The results in Fig.~\ref{unknownK_success_rate_comparison} shows that the heuristic variant outperforms the max-gap algorithm across all values of $b$, with its performance improving as $b$ increases, which is because a larger $b$ yields a clearer separation between the scores of positive and negative samples. The performance advantage of the heuristic variant stems from its ability to effectively mitigate the impact of outliers and local fluctuations in the ordered score sequence. In practice, due to the randomness in pooling design, together with the dilution effect and noise, some negative samples may occasionally attain high scores, while some positive samples have relatively low scores. This leads to isolated large gaps or local fluctuations, which can mislead the max-gap algorithm and result in an inaccurate estimation of the sparsity level. In contrast, the heuristic variant incorporates additional information about the global pattern of the ordered score sequence, rather than relying only on a single largest gap, thereby achieving more accurate estimation of the sparsity level and improving recovery performance.
\begin{figure}[!t]
\centering
\includegraphics[width=\linewidth]{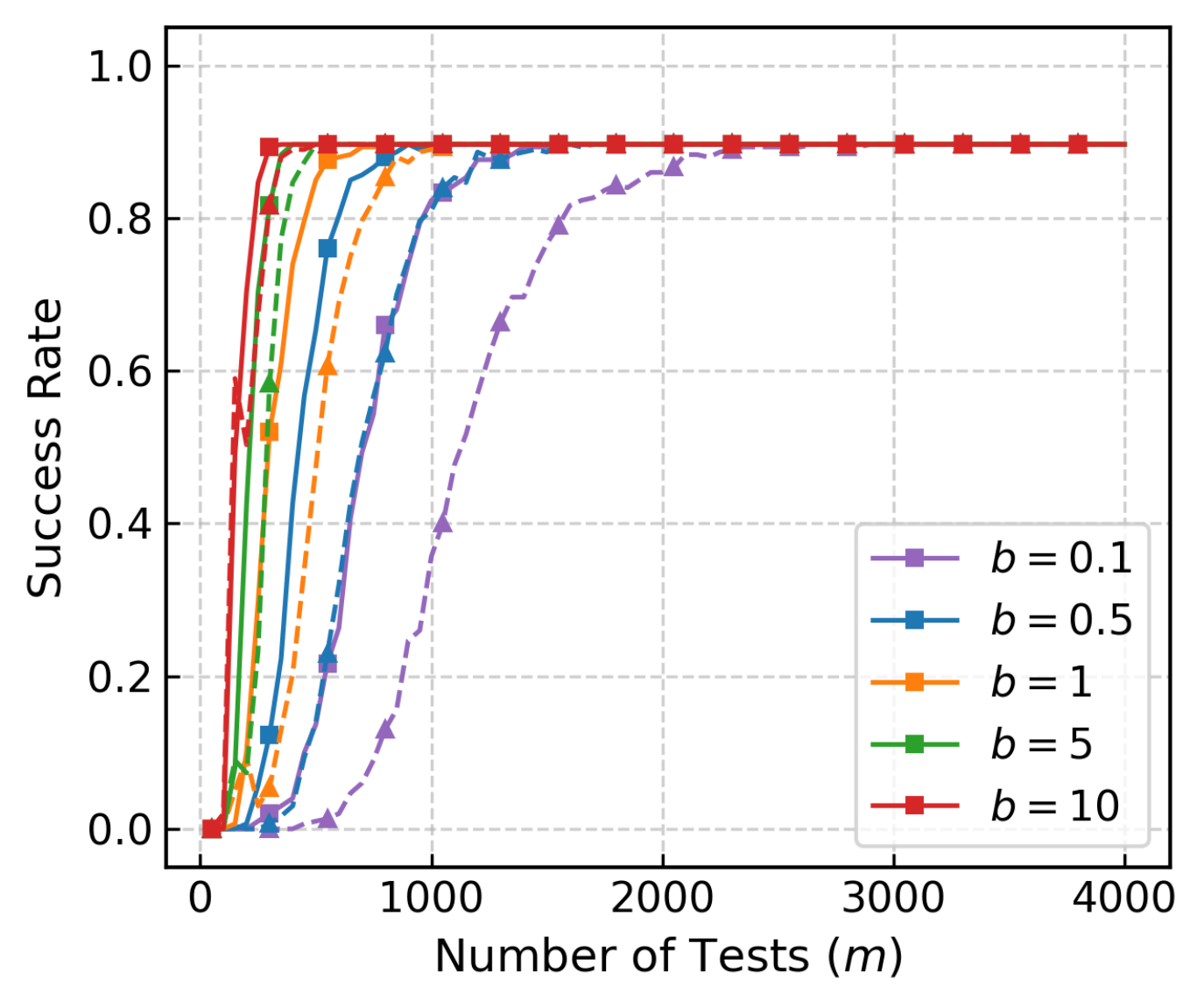}
\vspace{-0.8cm}
\caption{Performance comparison between the max-gap algorithm and its heuristic variant, with $k=10$, $p=\frac{\sqrt{2}}{32}$ and $\mathrm{LOD}=1$. The parameters are set to $a=1$ and $b\in\left\{0.1,0.5,1,5,10\right\}$. Dashed lines represent the max-gap algorithm, while solid lines represent the heuristic variant.}
\label{unknownK_success_rate_comparison}
\vspace{-0.4cm}
\end{figure}

\section{Conclusion}\label{sec:Conclusion}
In this paper, we studied the problem of sparse support recovery in group testing under a novel measurement model, which characterizes the dilution effect, $\mathrm{LOD}$-induced binary quantization, and liquid dispensing volume errors. Compared with existing methods, our model provides a practical characterization of the laboratory testing process. Building on this model, we developed low-complexity decoding algorithms for both settings with known and unknown sparsity levels. For known sparsity level, we proposed a score-based decoding algorithm and established sufficient conditions on the number of tests required for exact recovery under both noiseless and noisy settings. For the more challenging scenario where the sparsity level $k$ is unknown, we developed a max-gap algorithm, along with a heuristic variant to enhance robustness, which can achieve exact support recovery with $\mathcal{O}\left(k\log n\right)$ measurements. Extensive simulations demonstrated that the proposed methods achieve reliable performance in the presence of the dilution, $\mathrm{LOD}$ and dispensing errors, significantly outperforming existing combinatorial group testing algorithms. Future work will investigate alternative pooling designs tailored to the proposed measurement model and further improve decoding performance under more general laboratory conditions.


{\appendices
\section{Proof of Lemma~\ref{lemma:probability gap}}\label{appendix:lemma_probability gap}
\begin{proof}
We prove Lemma~\ref{lemma:probability gap} by analyzing the conditional probabilities $\tilde{p}_{E_1}$ and $\tilde{p}_{E_2}$ under the noiseless and noisy settings.
\subsection{Noisy Case}
We first consider the noisy model. The noisy viral load in the $i$-th pool is given by  
\begin{equation}
  \tilde{z}_i=\frac{{\sum\limits_{j = 1}^n {\left( {1 + {\epsilon_{i,j}}} \right){A _{i,j}}{x_j}} }}{{\sum\limits_{j = 1}^n {\left( {1 + {\epsilon_{i,j}}} \right){A _{i,j}}} }}, \quad i=1,\ldots,m,
\end{equation}
where $\epsilon_{i,j}$ follows a Gaussian distribution $\mathcal{N}\left(0,\sigma_{\epsilon}^2\right)$ truncated to the interval $\left[-\epsilon_{max},\epsilon_{max}\right]$. The corresponding binary measurement vector $\tilde{y}$ is obtained as
\begin{equation}
  \tilde{y}_i=\mathbb{I}\left(\tilde{z_i}\geq \mathrm{LOD}\right),\quad i=1,\ldots,m.
\end{equation}

\subsubsection{Lower bound on $\tilde{p}_{E_1}$}
We define $\tilde{p}_{E_1}$ as the probability of $\tilde{y}_i=1$ given that the sample $j$ is positive and included in the $i$-th pool. For a positive sample $j\in\mathcal{S}$, the event $\tilde{y}_i=1$ occurs if
\begin{equation}
  \begin{aligned}
  &\left(1+\epsilon_{i,j}\right)x_j+\sum\limits_{l \ne j} {{\left(1+\epsilon_{i,l}\right)A _{i,l}x_l}}\\
  &\geq \mathrm{LOD}\cdot \left(\left(1+\epsilon_{i,j}\right) + {\sum\limits_{l \ne j} {{\left(1+\epsilon_{i,l}\right)A _{i,l}}}}\right).
\end{aligned} 
\end{equation}
A sufficient condition for this event is
\begin{equation}\label{eq:noisy_event}
  x_j\geq \text{LOD}\cdot \left(1+\sum\limits_{l \ne j} \frac{{\left( {1 + {\epsilon_{i,l}}} \right){A _{i,l}}}}{\left(1+\epsilon_{i,j}\right)}\right).
\end{equation}
Therefore, the probability $\tilde{p}_{E_1}$ is lower bounded by
\begin{equation}
  \tilde{p}_{E_1} \geq \mathbb{P}\left(x_j\geq \text{LOD}\cdot \left(1+\sum\limits_{l \ne j} \frac{{\left( {1 + {\epsilon_{i,l}}} \right){A _{i,l}}}}{\left(1+\epsilon_{i,j}\right)}\right)\right).
\end{equation}

Since $x_j$ is independent of $\left\{A_{i,l}\right\}_{l\ne j}$ and $\left\{\epsilon_{i,l}\right\}_{l\ne j}$, we introduce a threshold $d_T\geq 1$ and decompose the event in~(\ref{eq:noisy_event}) into two parts: 
\begin{itemize}
  \item[(1)] $x_j\geq \text{LOD}\cdot d_T$, 
  \item[(2)] $1+\sum\limits_{l \ne j} \frac{{\left( {1 + {\epsilon_{i,l}}} \right){A_{i,l}}}}{\left(1+\epsilon_{i,j}\right)} \leq d_T$,
\end{itemize}
which should hold simultaneously. Thus, we have
\begin{equation}\label{eq:two_event_pro}
  \begin{aligned}
    \tilde{p}_{E_1}\geq &\mathbb{P}\left(x_j\geq \text{LOD} \cdot d_T\right)\\ &\cdot \mathbb{P}\left(1+\sum\limits_{l \ne j} \frac{{\left( {1 + {\epsilon_{i,l}}} \right){A_{i,l}}}}{\left(1+\epsilon_{i,j}\right)} \leq d_T\right).
  \end{aligned}
\end{equation}

For the first term on the right-hand side of inequality~(\ref{eq:two_event_pro}), since $x_j$ follows a log-normal distribution, i.e., $\log_{10} x_j \sim \mathcal{N}\left(\mu_x, \sigma_x^2\right)$, then
\begin{equation}
  \begin{aligned}
  &\mathbb{P}\left(x_{j}\geq \text{LOD}\cdot d_T\right)=1-\Phi\left(\frac{\log_{10}\left(\text{LOD}\cdot d_T\right)-\mu_x}{\sigma_x}\right) \\
  &=\Phi\left(\frac{\mu_x-\log_{10}\left(\text{LOD}\cdot d_T\right)}{\sigma_x}\right),
  \end{aligned}
\end{equation}
where $\Phi\left(\cdot\right)$ is the cumulative distribution function of the standard Gaussian distribution. For simplicity, set $\Phi_{d_T}=\Phi\left(\left({\mu_x-\log_{10}\left(\text{LOD}\cdot d_T\right)}\right)/{\sigma_x}\right)$.

Applying Markov's inequality, the second term on the right-hand side of inequality~(\ref{eq:two_event_pro}) is lower bounded by
\begin{equation}
  \begin{aligned}
  &\mathbb{P}\left(1+\sum\limits_{l \ne j} \frac{{\left( {1 + {\epsilon_{i,l}}} \right){A_{i,l}}}}{\left(1+\epsilon_{i,j}\right)} \leq d_T \right) \\
  &\geq 1- \frac{1}{d_T} {\mathbb{E}\left[1+\sum\limits_{l \ne j} \frac{{\left( {1 + {\epsilon_{i,l}}} \right){A_{i,l}}}}{\left(1+\epsilon_{i,j}\right)} \right]} \\
  &=1- \frac{1}{{{d_T}}}\left( {1 + \left( {n - 1} \right)p\mathbb{E}\left[ {\frac{1}{\left( {1 + {\epsilon_{i,j}}} \right)}} \right]} \right).
  \end{aligned}
\end{equation}
Since $\left|\epsilon_{i,j}\right|<\epsilon_{max}<1$, we can expand $\frac{1}{1+\epsilon_{i,j}}$ as
\begin{equation}
  \frac{1}{{1 + {\epsilon_{i,j}}}} = \sum\limits_{t = 0}^\infty  {{{\left( { - {\epsilon_{i,j}}} \right)}^t}}.
\end{equation}
Taking expectation yields
\begin{equation}
  \mathbb{E}\left[\frac{1}{1+\epsilon_{i,j}}\right]=1+\sum\limits_{t = 1}^\infty  {\mathbb{E}\left[ {\epsilon_{i,j}^{2t}} \right]} \leq 1+\frac{\sigma_{\epsilon}^2}{1-\epsilon_{max}^2}.
\end{equation}

Combining the above, the probability $\tilde{p}_{E_1}$ satisfies
\begin{equation}
  \tilde{p}_{E_1}\geq \Phi_{d_T}\cdot \left(1-\frac{1+\left(n-1\right)p\left(1+\sigma_{\epsilon}^2/\left(1-\epsilon_{max}^2\right)\right)}{d_T}\right).
\end{equation}

\subsubsection{Upper bound on $\tilde{p}_{E_2}$}
We define $\tilde{p}_{E_2}$ as the probability of $\tilde{y}_i=1$ given that the sample $j$ is negative and included in the $i$-th pool. For a negative sample $j\notin \mathcal{S}$ with $x_j=0$, $\tilde{y}_i=1$ implies that at least one positive sample appears in the $i$-th pool. Hence,
\begin{equation}
  \tilde{p}_{E_2} \leq \mathbb{P}\left(\left|\{l:A_{i,l}=1\,and\,l\in \mathcal{S}\}\right|\geq 1\right)=1-\left(1-p\right)^k.
\end{equation} 

\subsubsection{Lower bound on $\Delta \tilde{p}$}
We now derive the lower bound on the probability gap $\Delta \tilde{p}$. By definition, $\Delta \tilde{p}=\tilde{p}_{E_1}-\tilde{p}_{E_2}$. Combining the bounds on $\tilde{p}_{E_1}$ and $\tilde{p}_{E_2}$ established above, we obtain
\begin{equation}\label{eq:probability_gap}
  \begin{aligned}
    &\Delta \tilde{p}=\tilde{p}_{E_1}-\tilde{p}_{E_2} \\
    &\geq {\Phi _{{d_T}}} \cdot \left( {1 - \frac{{1 + \left( {n - 1} \right)p}}{{{d_T}}} - \frac{{\left( {n - 1} \right)p\sigma _{\epsilon}^2}}{{\left( {1 - {\epsilon}_{\max }^2} \right){d_T}}}} \right) \\
    &\quad - \left(1 - \left(1-p\right)^k\right).
  \end{aligned}
\end{equation}

To ensure the reliable distinguishability between positive and negative samples, we require the probability gap $\Delta \tilde{p}$ to be sufficiently large. Specifically, we impose the condition
\begin{equation}
  \Delta\tilde{p}\geq \frac{\left(1-p\right)^k}{2},
\end{equation}
which holds when the probability $p$ satisfies
\begin{equation}
  0< p \leq \frac{2\left(d_T-1\right)\Phi_{d_T}-d_T}{2\left(n-1\right)\left(1+{\sigma_{\epsilon}^2}/\left({1-\epsilon_{max}^2}\right)\right)\Phi_{d_T}+kd_T}.
\end{equation}
To make certain that the above bound is nontrivial, we enforce
\begin{equation}
  \left(1-\frac{1}{d_T}\right)\Phi_{d_T}>\frac{1}{2}.
\end{equation}
Introducing a parameter $\alpha\in\left(\frac{1}{2},1\right)$, it suffices to require
\begin{equation}
  1-\frac{1}{d_T}>\alpha, \Phi_{d_T}>\frac{1}{2\alpha},
\end{equation}
which yields
\begin{equation}
  \frac{1}{1-\alpha}< d_T <\frac{1}{\mathrm{LOD}}10^{\mu_x-\sigma_x\Phi^{-1}\left({1}/{\left(2\alpha\right)}\right)},\alpha\in\left(\frac{1}{2},1\right).
\end{equation}

Under the above conditions on $p$ and $d_T$, we obtain
\begin{equation}
  \Delta\tilde{p}\geq \frac{\left(1-p\right)^k}{2}.
\end{equation}

\subsection{Noiseless Case}
The analysis for the noiseless setting is analogous to that of the noisy case, except that liquid dispensing volume errors are not considered. In the noiseless setting, the conditional probabilities $p_{E_1}$ and $p_{E_2}$ are bounded by
\begin{equation}
  \begin{aligned}
    p_{E_1}&\geq \Phi_{d_T}\cdot\left(1-\frac{1+\left(n-1\right)p}{d_T}\right),\\
    p_{E_2}&\leq 1-\left(1-p\right)^k,
  \end{aligned}
\end{equation}
respectively. These bounds imply that the probability gap $\Delta p$ satisfies
\begin{equation}
  \Delta p \geq \Phi_{d_T}\cdot\left(1-\frac{1+\left(n-1\right)p}{d_T}\right)-\left(1-\left(1-p\right)^k\right).
\end{equation}
Compared with the noisy case in~(\ref{eq:probability_gap}), the probability gap is increased by
\begin{equation*}
  \frac{\left(n-1\right)p\sigma_{\epsilon}^2}{\left(1-\epsilon_{max}^2\right)d_T}\Phi_{d_T}.
\end{equation*}
The proof of Lemma~\ref{lemma:probability gap} is thus completed. 
\end{proof}

\section{Proof of Theorem~\ref{theorem:score-based algorithm}}\label{appendix:theorem_score-based algorithm}
Before the proof of Theorem~\ref{theorem:score-based algorithm}, we introduce the Bernstein's inequality.
\begin{lemma}[Bernstein's inequality \cite{ci2013}]\label{lemma:bernstein inequality}
  Let $X_1,\ldots,X_n$ be independent real-valued random variables 
  with finite variance such that $X_i\leq b$ for some $b>0$ almost surely for all $i\leq n$. Let $S = \sum\limits_{i = 1}^n {\left( {{X_i} - \mathbb{E}\left[ {{X_i}} \right]} \right)}$ and $v = \sum\limits_{i = 1}^n {\mathbb{E}\left[ {X_i^2} \right]}$. Then, for all $t>0$, we have
  \begin{equation}
    \mathbb{P}\left( {S \geq t} \right) \leq \exp \left( { - \frac{{{t^2}}}{2\left({v + bt/3}\right)}} \right).
  \end{equation}
\end{lemma}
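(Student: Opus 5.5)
The plan is to use the standard Cramér--Chernoff argument: bound the moment generating function of $S$, then apply Markov's inequality to $e^{\lambda S}$ and choose $\lambda>0$ well. Let $\phi(u)=e^u-u-1$. The first step is the elementary fact that $u\mapsto \phi(u)/u^2$ (extended by $1/2$ at $u=0$) is nondecreasing on $\mathbb{R}$. Since $X_i\le b$ almost surely, for $\lambda>0$ we have $\lambda X_i\le\lambda b$, and therefore
\begin{equation*}
\phi(\lambda X_i)\le X_i^2\,\frac{\phi(\lambda b)}{b^2}\quad\text{almost surely}.
\end{equation*}
This is the only place where the one-sided bound $X_i\le b$ enters. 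It is also why the non-centered second moment $\mathbb{E}[X_i^2]$ appears in $v$.

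Next I bound the log-MGF of each centered summand. Using $\log y\le y-1$,
\begin{equation*}
\log\mathbb{E}\left[e^{\lambda(X_i-\mathbb{E}[X_i])}\right]\le \mathbb{E}\left[e^{\lambda X_i}\right]-1-\lambda\mathbb{E}[X_i]=\mathbb{E}\left[\phi(\lambda X_i)\right]\le \mathbb{E}\left[X_i^2\right]\frac{\phi(\lambda b)}{b^2}.
\end{equation*}
The $X_i$ are independent, so summing over $i$ gives $\log\mathbb{E}[e^{\lambda S}]\le v\,\phi(\lambda b)/b^2$. Markov's inequality then yields
\begin{equation*}
\mathbb{P}(S\ge t)\le\exp\left(-\lambda t+v\,\phi(\lambda b)/b^2\right)\quad\text{for every }\lambda>0.
\end{equation*}

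To reach the stated form without optimizing exactly (exact optimization would produce the Bennett function $(1+u)\log(1+u)-u$), I use a series bound. Since $k!\ge 2\cdot 3^{k-2}$ for $k\ge2$, for $0\le u<3$ we get
\begin{equation*}
\phi(u)=\sum_{k\ge2}\frac{u^k}{k!}\le\frac{u^2}{2}\sum_{k\ge 2}\left(\frac{u}{3}\right)^{k-2}=\frac{u^2}{2(1-u/3)}.
\end{equation*}
Hence, for $0<\lambda b<3$, the exponent is at most $-\lambda t+\frac{v\lambda^2}{2(1-\lambda b/3)}$. I then choose $\lambda=t/(v+bt/3)$. This choice satisfies $\lambda b<3$ whenever $v>0$; the degenerate case $v=0$ forces $X_i=0$ a.s., so $S=0$ and the bound is trivial. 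With this $\lambda$ we have $1-\lambda b/3=v/(v+bt/3)$, so the quadratic term equals $\lambda^2(v+bt/3)/2=t^2/(2(v+bt/3))$. The linear term is $\lambda t=t^2/(v+bt/3)$, and the exponent becomes exactly $-t^2/\bigl(2(v+bt/3)\bigr)$, which is the claim.

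The main obstacle is the monotonicity of $\phi(u)/u^2$, since everything else is routine algebra. I would prove it by writing
\begin{equation*}
\frac{\phi(u)}{u^2}=\int_0^1(1-s)e^{su}\,ds,
\end{equation*}
which follows from Taylor's formula with integral remainder. Each integrand $(1-s)e^{su}$ is nondecreasing in $u$ for $s\in[0,1]$, so the integral is nondecreasing in $u$ on all of $\mathbb{R}$, including negative $u$. This handles values of $X_i$ that are unbounded below, which is exactly what the one-sided hypothesis allows.
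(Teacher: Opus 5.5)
Your proof is correct and complete. The integral form of $\phi(u)/u^2$ gives the monotonicity that turns the one-sided bound $X_i\le b$ into $\log\mathbb{E}[e^{\lambda S}]\le v\,\phi(\lambda b)/b^2$. The estimate $\phi(u)\le u^2/(2(1-u/3))$ with $\lambda=t/(v+bt/3)$, together with your separate treatment of $v=0$, then yields exactly the stated exponent.

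The paper gives no proof of its own and simply cites Boucheron--Lugosi--Massart. There the bound is obtained by essentially the same route: Bennett's moment-generating-function bound followed by the sub-gamma estimate, which rests on the same inequality $q!\ge 2\cdot 3^{q-2}$.
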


Similar bounds apply to the lower deviation $S \leq -t$ as well as the two-sided deviation $\left|S\right|\geq t$, with an additional factor of two in the latter case.

\begin{proof}
  For each sample $j\in\left\{1,2,\ldots,n\right\}$, we define $X_j$ as its score, where the contribution from the $i$-th test is given by
  \begin{equation}
    X_j^{(i)}=A_{i,j}\left(a\tilde{y}_i-b\left(1-\tilde{y}_i\right)\right), \quad i=1,\ldots,m.
  \end{equation}
  If the sample $j$ is positive, i.e., $j\in \mathcal{S}$, we obtain
  \begin{equation}
    \begin{aligned}
      &\mu_1\triangleq \mathbb{E}\left[X_j^{(i)}\mid j\in \mathcal{S}\right]=\mathbb{E}\left[A_{i,j}\left(a\tilde{y}_i-b(1-\tilde{y}_i)\right)\mid j\in \mathcal{S}\right]\\
      &=p\left((a+b)\tilde{p}_{E_1}-b\right),\\
      &v_1\triangleq\mathbb{E}\left[\left(X_j^{(i)}\right)^2 \mid j\in \mathcal{S}\right]=p\left(a^2\tilde{p}_{E_1}+ b^2\left(1-\tilde{p}_{E_1}\right)\right).
    \end{aligned}
  \end{equation}
  Similarly, for $j\notin \mathcal{S}$, 
  \begin{equation}
    \begin{aligned}
      &\mu_2\triangleq\mathbb{E}\left[X_j^{(i)}\mid j\notin \mathcal{S}\right]=p\left((a+b)\tilde{p}_{E_2}-b\right),\\
      &v_2\triangleq\mathbb{E}\left[\left(X_j^{(i)}\right)^2 \mid j\notin \mathcal{S}\right]=p\left(a^2\tilde{p}_{E_2}+ b^2\left(1-\tilde{p}_{E_2}\right)\right).
    \end{aligned}
  \end{equation}

  To exactly recover the support set $\mathcal{S}$, it suffices that each positive sample attains a higher score than any negative sample, i.e.,
  \begin{equation}
    \mathop {\min }\limits_{j \in \mathcal{S}} {X_j} > \mathop {\max }\limits_{j \notin \mathcal{S}} {X_j},
  \end{equation}
  where ${X_j} = \sum\limits_{i = 1}^m {X_j^{(i)}}$. Accordingly, recovery fails if at least one negative sample $j\notin\mathcal{S}$ has $X_j\geq \mathop {\min }\limits_{j \in \mathcal{S}} {X_j}$. The probability of this event satisfies
  \begin{equation}\label{eq:score pro}
    \begin{aligned}
      &\mathbb{P}\left(\mathop {\min }\limits_{j \in \mathcal{S}} {X_j} \leq \mathop {\max }\limits_{j \notin \mathcal{S}} {X_j}\right) \\
      &\leq \mathbb{P}\left(\mathop {\min }\limits_{j \in \mathcal{S}} {X_j} \leq X_T\right) + \mathbb{P}\left(\mathop {\max }\limits_{j \notin \mathcal{S}} {X_j} \geq X_T\right),
    \end{aligned}
  \end{equation} 
  for any threshold $X_T>0$. We choose $X_T=\frac{\left(\mu_1+\mu_2\right)m}{2}$. Applying Lemma~\ref{lemma:bernstein inequality}, the first term on the right-hand side of inequality~(\ref{eq:score pro}) is bound by
  \begin{equation}
    \begin{aligned}
      &\mathbb{P}\left(\mathop {\min }\limits_{j \in \mathcal{S}} {X_j} \leq X_T\right) \leq k \mathbb{P}\left(X_j \leq X_T \mid j\in \mathcal{S}\right) \\
      &\leq k\exp\left(-\frac{t^2 / 2}{mv_1+tM/3}\right),
    \end{aligned}
  \end{equation}
  where $t=\frac{\left(\mu_1-\mu_2\right)m}{2}=\frac{mp\left(a+b\right){\Delta \tilde{p}}}{2}$ and $M=\max \left(a,b\right)$. Similarly, for the second term,
  \begin{equation}
    \begin{aligned}
      &\mathbb{P}\left(\mathop {\max }\limits_{j \notin \mathcal{S}} {X_j} \geq X_T\right) \leq \left(n-k\right) \mathbb{P}\left(X_j\geq X_T \mid j\notin \mathcal{S}\right) \\
      &\leq \left(n-k\right)\exp\left(-\frac{t^2 / 2}{mv_2+tM/3}\right).
    \end{aligned}
  \end{equation}
  Thus, the probability of at least one false negative is bounded by
  \begin{equation}
    \begin{aligned}
      &\mathbb{P}\left(\mathrm{there\,exists\,at\,least\,one\,false\,negative}\right) \\
      &\leq k\exp\left(-\frac{t^2 / 2}{mv_1+tM/3}\right) + \left(n-k\right)\exp\left(-\frac{t^2 / 2}{mv_2+tM/3}\right)\\
      &\leq n\exp\left(-\frac{t^2 / 2}{m \mathop{\max}\left(v_1,v_2\right)+tM/3}\right).
    \end{aligned}
  \end{equation}
  
  Let $\delta\in\left(0,1\right)$ denote the target failure probability. We have $\mathbb{P}\left(\hat{\mathcal{S}}=\mathcal{S}\right)\geq 1-\delta$, provided that
  \begin{equation}
    m \geq \frac{4 \left(2r^2 + {\Delta \tilde{p}}r/3\right) \log\left(n/\delta\right)}{p \left({\Delta \tilde{p}}\right)^2},
  \end{equation}
  where $r=\frac{\max\left(a, b\right)}{a+b}$. This completes the proof of Theorem~\ref{theorem:score-based algorithm}.
\end{proof}

\section{Proof of Theorem~\ref{theorem:max-gap algorithm}}\label{appendix:theorem_max-gap algorithm}
\begin{proof}
  For the max-gap algorithm, exact support recovery is achieved if the scores of all samples concentrate around their expectations and there exists a clear separation between positive and negative samples. Specifically, for $j\in \mathcal{S}$, $\mathbb{E}\left[X_j\right]=m\mu_1$, and for $j\notin \mathcal{S}$, $\mathbb{E}\left[X_j\right]=m\mu_2$. It suffices to establish that, with high probability, 
  \begin{equation}\label{eq:gap pro}
    \begin{aligned}
      &X_j\in\left(m\mu_1-t, m\mu_1+t\right), j\in \mathcal{S}, \\
      &X_j\in\left(m\mu_2-t, m\mu_2+t\right), j\notin \mathcal{S},
    \end{aligned}
  \end{equation}
  for some $t>0$. To guarantee a clear separation between two groups, we impose $m\mu_1-m\mu_2\geq 4t$, which yields
  \begin{equation}
    t\leq \frac{mp\left(a+b\right)\Delta \tilde{p}}{4}.
  \end{equation}

  Applying Lemma~\ref{lemma:bernstein inequality}, we have
  \begin{equation}
    \begin{aligned}
      &\mathbb{P}\left(\left|X_j-m\mu_1\right| \geq t \mid j\in \mathcal{S}\right)\leq 2\exp\left(-\frac{t^2 / 2}{mv_1+tM/3}\right),\\
      &\mathbb{P}\left(\left|X_j-m\mu_2\right| \geq t \mid j\notin \mathcal{S}\right)\leq 2\exp\left(-\frac{t^2 /2}{mv_2+tM/3}\right).
    \end{aligned}
  \end{equation}
  By the union bound over all samples, the probability that the events in~(\ref{eq:gap pro}) fail is bounded by 
  \begin{equation}
    \begin{aligned}
      \mathbb{P}\left(\mathrm{failure}\right) &\leq 2k\exp\left(-\frac{t^2 / 2}{mv_1+tM/3}\right)\\ 
      &\quad + 2\left(n-k\right)\exp\left(-\frac{t^2 /2}{mv_2+tM/3}\right)\\
      &\leq 2n\exp\left(-\frac{t^2 / 2}{m\max\left(v_1, v_2\right)+tM/3}\right).
    \end{aligned}
  \end{equation}

  Let $\delta\in\left(0,1\right)$ denote the target failure probability. Substituting $t = \frac{m p (a+b)\Delta \tilde{p}}{4}$, we can obtain $\mathbb{P}\left(\hat{\mathcal{S}}=\mathcal{S}\right)\geq 1-\delta$, provided that
  \begin{equation}
    m \geq \frac{16 \left(2r^2 + {\Delta \tilde{p}}r/6\right) \log\left(2n/\delta\right)}{p \left({\Delta \tilde{p}}\right)^2},
  \end{equation}
  where $r=\frac{\max\left(a, b\right)}{a+b}$. This completes the proof of Theorem~\ref{theorem:max-gap algorithm}.
\end{proof}}



\bibliographystyle{IEEEtran}
\bibliography{IEEEabrv,reference}


 





\end{document}